\newcommand{\bit}{\begin{itemize}}
\newcommand{\eit}{\end{itemize}}
\newcommand{\ben}{\begin{enumerate}}
\newcommand{\een}{\end{enumerate}}
\newcommand{\bdesc}{\begin{description}}
\newcommand{\edesc}{\end{description}}
\newcommand{\bea}{\begin{array}}
\newcommand{\eea}{\end{array}}
\newcommand{\beqa}{\begin{eqnarray}}
\newcommand{\eeqa}{\end{eqnarray}}
\newcommand{\ds}{\displaystyle}
\newcommand{\Comment}[1]{}
\newtheorem{prop}{Proposition}
\newtheorem{corol}{Corollary}
\def\N{{\mathds N}}
\def\R{{\mathds R}}
\def\C{{\mathds C}}
\def\cC{\mbox{$\mathcal C$}}
\def\cF{\mbox{$\mathcal F$}}
\def\cG{\mbox{$\mathcal G$}}
\def\cL{\mbox{$\mathcal L$}}
\def\cN{\mbox{$\mathcal N$}}
\def\cW{\mbox{$\mathcal W$}}
\newcommand{\be}{\begin{equation}}
\newcommand{\ee}{\end{equation}}
\newcommand{\bzero}{{\mbox{\boldmath $0$}}}
\newcommand{\bd}{{\mbox{\boldmath $d$}}}
\newcommand{\bff}{{\mbox{\boldmath $f$}}}
\newcommand{\bn}{{\mbox{\boldmath $n$}}}
\newcommand{\bm}{{\mbox{\boldmath $m$}}}
\newcommand{\bp}{\mbox{\boldmath $p$}}
\newcommand{\bor}{{\mbox{\boldmath $r$}}}
\newcommand{\bw}{{\mbox{\boldmath $w$}}}
\newcommand{\bs}{{\mbox{\boldmath $s$}}}
\newcommand{\bt}{{\mbox{\boldmath $t$}}}
\newcommand{\bx}{{\mbox{\boldmath $x$}}}
\newcommand{\by}{{\mbox{\boldmath $y$}}}
\newcommand{\bq}{{\mbox{\boldmath $q$}}}
\newcommand{\bz}{{\mbox{\boldmath $z$}}}
\newcommand{\bA}{{\mbox{\boldmath $A$}}}
\newcommand{\bB}{{\mbox{\boldmath $B$}}}
\newcommand{\bD}{{\mbox{\boldmath $D$}}}
\newcommand{\bE}{{\mbox{\boldmath $E$}}}
\newcommand{\bG}{{\mbox{\boldmath $G$}}}
\newcommand{\bH}{{\mbox{\boldmath $H$}}}
\newcommand{\bI}{{\mbox{\boldmath $I$}}}
\newcommand{\bJ}{{\mbox{\boldmath $J$}}}
\newcommand{\bM}{{\mbox{\boldmath $M$}}}
\newcommand{\bP}{{\mbox{\boldmath $P$}}}
\newcommand{\bQ}{{\mbox{\boldmath $Q$}}}
\newcommand{\bR}{{\mbox{\boldmath $R$}}}
\newcommand{\bS}{{\mbox{\boldmath $S$}}}
\newcommand{\bU}{{\mbox{\boldmath $U$}}}
\newcommand{\bV}{{\mbox{\boldmath $V$}}}
\newcommand{\bX}{{\mbox{\boldmath $X$}}}
\newcommand{\bW}{{\mbox{\boldmath $W$}}}
\newcommand{\bZ}{{\mbox{\boldmath $Z$}}}
\newcommand{\btheta}{{\mbox{\boldmath $\theta$}}}
\newcommand{\bTheta}{{\mbox{\boldmath $\Theta$}}}
\newcommand{\tGLRT}{t_{\mbox{\tiny GLRT}}}
\newcommand{\tGLRTb}{\bar{t}_{\mbox{\tiny GLRT}}}
\newcommand{\tTSGLRT}{t_{\mbox{\tiny 2S-GLRT}}}
\newcommand{\tMPID}{t_{\mbox{\tiny MPID}}}
\newcommand{\tCLRT}{t_{\mbox{\tiny CMPID}}}
\newcommand{\tLRT}{t_{\mbox{\tiny LRT}}}
\newcommand{\tLMPID}{t_{\mbox{\tiny LMPID}}}
\newcommand{\tED}{t_{\mbox{\tiny ED}}}
\newcommand{\test}{\mbox{$
\begin{array}{c}
\stackrel{ \stackrel{\textstyle H_1}{\textstyle >} }{
\stackrel{\textstyle <}{\textstyle H_0} }
\end{array}
$}}
\newcommand{\testINV}{\mbox{$
\begin{array}{c}
\stackrel{ \stackrel{\textstyle H_0}{\textstyle >} }{
\stackrel{\textstyle <}{\textstyle H_1} }
\end{array}
$}}
\newcommand{\testinv}{\mbox{$
\begin{array}{c}
\stackrel{ \stackrel{\textstyle H_0}{\textstyle >} }{
\stackrel{\textstyle <}{\textstyle H_1} }
\end{array}
$}}
\title{Adaptive Radar Detection of a Subspace Signal Embedded in Subspace  
Structured plus Gaussian Interference Via Invariance}
\author{A. De Maio, \IEEEmembership{Fellow, IEEE,} and D. Orlando, \IEEEmembership{Senior Member, IEEE}
\thanks{A. De Maio is with the Dipartimento di Ingegneria Elettrica e delle Tecnologie dell'Informazione, Universit\`a degli Studi di Napoli ``Federico II'',
via Claudio 21, I-80125 Napoli, Italy. E-mail: {\tt ademaio@unina.it}.}
\thanks{D. Orlando is with ELETTRONICA S.p.A., Via Tiburtina Valeria Km 13,700, 00131 Roma, Italy. Phone: +39 06 415 4873. E-mail: {\tt danilor78@gmail.com}.}
}
\begin{document}

\maketitle

\begin{abstract}
This paper deals with adaptive radar detection of a subspace signal  
competing with two sources of interference. The former is Gaussian  
with unknown covariance matrix and accounts for the joint presence of  
clutter plus thermal noise. The latter is structured as a subspace  
signal and models coherent pulsed jammers impinging on the radar  
antenna. The problem is solved via the {\em Principle of Invariance} which  
is based on the identification of a suitable group of transformations  
leaving the considered hypothesis testing problem invariant. A {\em maximal  
invariant statistic}, which completely characterizes the class of  
invariant decision rules and significantly compresses the original  
data domain, as well as its statistical characterization are  
determined. Thus, the existence of the optimum invariant detector is  
addressed together with the design of practically implementable  
invariant decision rules. At the analysis stage, the performance of  
some receivers belonging to the new invariant class is established  
through the use of analytic expressions.
\end{abstract}

\begin{keywords}
Adaptive Radar Detection, Constant False Alarm Rate, Invariance, Maximal Invariants, Subspace Model, Coherent Interference.
\end{keywords}

\section{Introduction}

\PARstart{A}daptive radar detection of targets embedded in Gaussian interference is an active research field, which has enjoyed the 
effort of world class scientists. As a result, in the last decades, a plethora of adaptive or nonadaptive
detection strategies have been proposed
to face with realistic applications. Generally speaking, the existing architectures can be classified into two macro-classes which
differ in the level of a priori knowledge about the target response. The former assumes perfectly known target signature
at the design stage (rank-1 receivers) \cite[and references therein]{Kelly-GLRT,Kelly-Nitzberg,yuri,orlando2010,Pulsone,Bandiera07}, 
whereas the latter takes into account 
possible uncertainties in the nominal steering vector \cite{BOR-Morgan,DeMaioMismatches,Gini-Farina-Greco}. 
As a matter of fact, in situations of practical interest, several error sources (such as array calibration uncertainties, 
beampointing errors, multipath, etc.) \cite{antennaBased} can give rise to mismatched signals which can degrade the detection performances of
the architectures belonging to the first class \cite{BOR-Morgan,Richmond-1,Richmond-2}.

The subspace detection represents an effective analytic tool incorporating the partial 
knowledge of the target response in the detector
design and, hence, mitigating the performance degradation due to steering vector errors 
\cite{Kelly-TR,Scharf-Friedlander,ABGR-Subspace,BBORS-Direction,raghavanSubspace,BSV-Subspace,gini1,Fabrizio-Farina,Li2}.
The main idea of the subspace approach is to constrain the 
target steering vector to belong to a suitable subspace of the observation space. 
By doing so, it is possible to capture the energy of the potentially distorted wavefront of a mainlobe target. As a result,
architectures exploiting this trick are capable of declaring the presence of targets whose signature
is significantly different from the nominal one.
It is worth pointing out that the subspace idea can be also exploited to model coherent interfering signals
impinging on the radar antenna, whose directions of arrival have been estimated within some uncertainty.
For instance, in \cite{ABGR-Subspace,BBORS-Direction} the authors 
devise adaptive decision schemes to reveal
extended targets when the interference comprises a random contribution (referring to clutter and thermal noise) and
a structured unwanted component.

In \cite{raghavanSubspace}, \cite{Bose-Steinhardt_02}, and \cite{Scharf-Friedlander} the subspace detection 
has been dealt with the
invariance theory in hypothesis testing problems \cite{Muirhead,lehmann,Scharf-book}.
By doing so, the authors focus on decision rules exhibiting some natural symmetries
implying important practical properties such as the Constant False Alarm Rate (CFAR) behavior. Besides,
the use of invariance leads to a data reduction because all invariant tests can be expressed in terms of
a statistic, called {\em maximal invariant}, which organizes the original data into equivalence classes. Also the
parameter space is usually compressed after reduction by invariance and the dependence on the original
set of parameters becomes embodied into a maximal invariant in the parameter space (induced maximal
invariant). 
Further examples of the 
application of such theory to radar detection problems can be also found in \cite{Bose-Steinhardt,
Kay-Gabriel,DeMaio2,DeMaioConte_Invariance,Raghavan,ScharfInv,BessonInv,Raghavan-2}.

It is now important to observe that some of the just mentioned papers do not assume the 
presence of structured interference \cite{raghavanSubspace,Bose-Steinhardt_02,Bose-Steinhardt,Kay-Gabriel,
DeMaio2,DeMaioConte_Invariance,Raghavan,ScharfInv,BessonInv,Raghavan-2}, while 
others exploit the {\em Principle of Invariance}
to solve the nonadaptive subspace detection problem when a coherent interference source illuminates the radar antenna \cite{Scharf-Friedlander}.
The possibility to address the adaptive subspace detection problem in the joint presence of random and 
subspace structured interference via invariance, 
to the authors best knowledge, has not yet been considered in open literature. 
In this respect, the scope of this paper is to fill this gap.
To this end, this work is focused on adaptive detection of a subspace signal  
competing with two interference sources. The former is a completely  
random component, modeled as a Gaussian vector with unknown covariance  
matrix, and represents the returns from clutter and thermal noise. The  
latter is a subspace structured signal (with unknown location
parameters) and accounts for the presence of (possible) multiple  
pulsed coherent jammers impinging on the radar antenna from some  
directions. The problem is analytically formulated as a binary  
hypothesis test and the Principle of Invariance is exploited at the  
design stage to concentrate the attention on radar detectors enjoying  
some desirable practical features. More specifically, a suitable group  
of transformation leaving the considered testing problem unaltered is  
established. Hence, the practical importance of the resulting group  
action is explained as a mean to impose the CFAR property with respect to the clutter plus noise covariance  
matrix and the jammer location parameters. Otherwise stated, all the  
receivers sharing invariance with respect to the devised group of  
transformations exhibit the mentioned CFAR features.
Due to the primary importance of characterizing the family of invariant  
detectors, a maximal invariant statistic \cite{Muirhead,lehmann}, organizing the data into  
orbits and reducing the size of the observation space, is determined  
together with its statistical characterization. All the decision rules  
in the invariance family can be cast as functions of a maximal  
invariant. The problem of synthesizing the Most Powerful Invariant  
(MPI) detector \cite{lehmann} is also addressed and a discussion on the existence of  
the Uniformly MPI (UMPI) test is provided. Interestingly, the UMPI  
detector does not exist but for the case where the dimension of the  
useful signal plus jammer subspace equals the size of the  
observation space. In this specific situation, the energy detector  
turns out to be UMPI and coincides with the GLRT and the Locally MPI Detector (LMPID). When the UMPI test  
does not exist, the Conditional MPI Detector (CMPID), obtained as the  
optimum detector conditioned on the ancillary part of the maximal  
invariant is designed \cite{lehmann}, showing that it is statistically equivalent to  
the GLRT.
At the analysis stage, we investigate the performances of some practically  
implementable invariant detectors providing analytical  
equations for the computation of the false alarm and the detection  
probabilities. The performance loss with respect to the clairvoyant  
MPI test is finally assessed showing that it is an acceptable value in  
some instances of practical interest.

The paper is organized as follows. The next section contains the definitions used in the mathematical derivations, while Section III is devoted
to problem formulation and derivation of the maximal invariant statistic. In Section IV, we provide the statistical distribution of the
maximal invariant, whereas in Section V, invariant decision schemes are devised and assessed. Some concluding 
remarks and hints for future work are given in Section VI. 
Finally, Appendices \ref{appendix:InvariantTransformations}-\ref{appendix:ProofOfLMPIDdet} contain 
analytical derivations of the presented results.

\textcolor{black}{
\subsection{Notation}
In the sequel, vectors and matrices are denoted by boldface lower-case and upper-case letters, respectively. 
Symbols $\det(\cdot)$ and $\mbox{Tr}(\cdot)$
denote the determinant and the trace of a square matrix, respectively. If $A$ and $B$ are scalars, then $A\times B$ is the usual product of scalars;
on the other hand, if $A$ and $B$ are generic sets, $A \times B$ denotes the Cartesian product of sets.
The imaginary unit is $j$, i.e., $\sqrt{-1}=j$. We denote by $\bI_N$ the
$N\times N$-dimensional identity matrix, while $\bzero$ is the null vector or matrix of proper dimensions. The Euclidean norm of a vector is denoted
by $\|\cdot\|$. As to the numerical sets, $\N$ is the set of natural numbers, $\R$ is the set of real numbers,
$\R^{N\times M}$ is the set of $(N\times M)$-dimensional real matrices (or vectors if $M=1$),
$\C$ is the set of complex numbers, and $\C^{N\times M}$ is the set of $(N\times M)$-dimensional complex matrices (or vectors if $M=1$).
The subspace spanned by the columns of a matrix $\bA\in\C^{m\times n}$ is denoted by $\langle\bA\rangle$.
The real and imaginary parts of a complex vector or scalar are denoted by $\Re(\cdot)$ and $\Im(\cdot)$, respectively.
Symbols $(\cdot)^*$, $(\cdot)^T$, and $(\cdot)^\dag$ stand for complex conjugate, transpose, and conjugate transpose, respectively.
The acronym iid means independent and identically distributed.
If $\bx\in\C^{N\times 1}$ is
distributed according to the complex normal distribution with mean $\bm\in\C^{N\times 1}$
and covariance matrix $\bM\in\C^{N\times N}$, we write $\bx\sim\cC\cN_N(\bm, \bM)$. 
The complex noncentral F-distribution with $n,m$ complex degrees of freedom and noncentrality parameter $\delta$ is denoted by
$\cC\cF_{n,m}(\delta)$. Symbol $\cC\cF_{n,m}(0)$ is used to indicate the complex central F-distribution.
The complex noncentral beta distribution with $n,m$ complex degrees of freedom and noncentrality parameter $\delta$ is denoted by
$\cC\beta_{n,m}(\delta)$, whereas we refer to the complex central beta distribution using symbol $\cC\beta_{n,m}(0)$.
Symbol $\bS\sim\cC\cW_N(K,\bA)$ means that $\bS\in\C^{N\times N}$ obeys the complex central Wishart distribution with
parameters $K\in\N$ and $\bA\in\C^{N\times N}$ positive definite matrix. Finally,
if $x\in\C$ obeys the complex noncentral chi-square distribution with $K$ complex degrees of freedom and noncentrality parameter $\delta$, we
write $x\sim\cC\chi^2_{K}(\delta)$; on the other hand, if $x\in\C$ is ruled by the complex central chi-square distribution 
with $K$ complex degrees of freedom, we write $x\sim\cC\chi^2_{K}(0)$.
}

\section{Problem Formulation and Maximal Invariants}

In this section, we describe the detection problem at hand and introduce the Principle of Invariance that allows
a compression of data dimensionality. Assume that a sensing system collects data from $N\geq 2$ channels (spatial and/or temporal).
The returns from the cell under test, after pre-processing, are properly sampled
and organized to form a $N$-dimensional vector, $\bor$ say.
We want to test whether or not $\bor$ contains useful target echoes assuming the presence of an interfering signal. 

The target signature is modeled as a vector in a known subspace, $\langle\bH\rangle$ say, where $\bH\in\C^{N\times r}$, $r\geq 1$, is a full-column-rank
matrix. In other words, the useful echoes can be expressed as a linear combination of the columns of $\bH$, i.e., $\bH\bp$ with $\bp\in\C^{r\times 1}$.
On the other hand, the interference component consists of two contributions. The former is representative of the clutter echoes and
thermal noise, while the latter accounts for possible coherent sources impinging on the receive antenna from directions different to that
where the radar system is steered. More precisely, the structured interferer signal 
is assumed to belong to a known subspace, $\langle\bJ\rangle$ say, with $\bJ\in\C^{N\times t}$, $t\geq 1$, 
a full-column-rank matrix, and, hence, it is a linear combination, $\bq\in\C^{t\times 1}$ say, of the columns of $\bJ$ with $m=t+r\leq N$.
The above model (with reference to Doppler processing) 
comes in handy to deal with scenarios where the presence of one or multiple coherent pulsed jammers from standoff platforms attempt
to protect a target located
in the mainlobe of the radar antenna.
We further assume that the matrix $[\bJ \ \bH]\in\C^{N\times m}$ is full-column-rank, namely that the columns of $\bJ$ are linearly independent of
those of\footnote{It is worth pointing out that the case $t=0$ (i.e., absence of structured interference) has already been considered in open
literature \cite{raghavanSubspace}.} $\bH$. 
Finally, as customary, we suppose that a set of $K$ secondary data,
$\bor_k \in \C^{N\times 1}$, $k=1, \ldots, K$, $K \geq N$, namely data free of signal components and structured interference, is available.

Summarizing, the decision problem at hand can be formulated in terms of the binary hypothesis test
\be
\left\{
\begin{array}{ll}
H_0:\left\{\begin{array}{ll}
\bor = \bJ\bq + \bn_0, \phantom{\alpha \bs+\bs} &  \\
\bor_k = \bn_{0k}, & k=1, \ldots, K,
\end{array}\right.
\\
\\
H_1:\left\{\begin{array}{ll}
\bor = \bH \bp + \bJ \bq  +  \bn_0, &  \\
\bor_k = \bn_{0k}, & k=1, \ldots, K,
\end{array}\right.
\end{array}
\right.
\label{eq:hypothesistest00}
\ee
where
\begin{itemize}
\item $\bH\bp\in\C^{N\times 1}$ and $\bJ\bq\in\C^{N\times 1}$ are the target and interference signatures, respectively, with $\bp$ and $\bq$ deterministic and 
unknown vectors\footnote{Observe that $\bp$ also jointly accounts for target reflectivity, radar equation, and channel effects.};
\item $\bn_0$ and $\bn_{0k} \in \mathbb{C}^{N\times 1}$,  $k=1, \ldots, K$, are iid complex normal random vectors with zero mean
and unknown positive definite covariance matrix $\bM_0\in\C^{N\times N}$.
\end{itemize}
Let us now recast (\ref{eq:hypothesistest00}) in canonical form \cite{Kelly-TR}. To this end, 
we exploit the QR-decomposition of the partitioned matrix $[\bJ \ \bH]$ given by
\be
[\bJ \ \bH] = \bQ \bR,
\ee
where $\bQ\in\C^{N\times m}$ is a slice of unitary matrix (or a unitary matrix if $m=N$), 
i.e., $\bQ^\dag\bQ=\bI_{m}$ (or $\bQ^\dag\bQ=\bQ\bQ^\dag=\bI_{N}$ if $m=N$), and $\bR\in\C^{m\times m}$ is a nonsingular upper triangular matrix.
Moreover, observe that
\be
\bQ = [\bQ_J \ \bQ_0], \quad \bR = \left[\begin{array}{cc}
                                    \bR_J & \bR_0
                                    \\
                                    \bzero & \bR_1
                                   \end{array}\right],
                                   \label{eq:QRdecomposition}
\ee
where $\bQ_J\in\C^{N\times t}$ and $\bR_J\in\C^{t\times t}$ come from the QR-decomposition of $\bJ$, namely, 
$\bJ=\bQ_J\bR_J$ with $\bQ_J$ such that $\bQ_J^\dag\bQ_J=\bI_t$ and $\bR_J$ a nonsingular upper triangular matrix,
$\bR_0\in\C^{t \times r}$, $\bR_1\in\C^{r\times r}$ is another nonsingular upper triangular matrix.
Equalities (\ref{eq:QRdecomposition}) are almost evident consequences of the Gram-Schmidt procedure.
Now, let us define a unitary matrix $\bU\in\C^{N\times N}$ which rotates the columns of $\bQ$ onto the first $m$ elementary vectors
of the standard basis of $\C^{N\times 1}$, i.e.,
\be
\bU\bQ=\left[
\begin{array}{cc}
 \bI_t & \bzero
 \\
 \bzero & \bI_r
 \\
 \bzero & \bzero
\end{array}
\right]=[\bE_t \ \bE_r],
\ee
where\footnote{It is worth noticing that when $m=N$, $[\bE_t \ \bE_r]=\bI_N$.} $\bE_t=[\bI_t \ \bzero \ \bzero]^T$ and
$\bE_t=[\bzero \ \bI_r \ \bzero]^T$.
Thus, under $H_1$, primary and secondary data can be transformed as follows
\begin{align}
\bz &= \bU\bor 
= \bU \left\{
\left[ \bJ \ \bH \right] 
\left[
\begin{array}{c}
 \bq
 \\
 \bp
\end{array}
\right]+\bn_0\right\}\nonumber
\\
&=\bU \left\{
\bQ\left[ \bR_q \ \bR_p \right] 
\left[
\begin{array}{c}
 \bq
 \\
 \bp
\end{array}
\right]+\bn_0\right\}\nonumber
\\
&=\left[ \bE_t \ \bE_r \right](\bR_q\bq + \bR_p \bp)+\bn \nonumber
\\
&= \left[ \bE_t \ \bE_r \right]\left[
\begin{array}{c}
 \btheta_{11}
 \\
 \btheta_2
\end{array}
\right]+\bn = \bE_t\btheta_{11} + \bE_r\btheta_2 + \bn,
\\
\bz_k &= \bU\bor_k = \bn_k,\quad k=1,\ldots,K,
\end{align}
where $\bR_q=[\bR_j^T \ \bzero]^T$, $\bR_p=[\bR_0^T \ \bR_1^T]^T$, $\bn=\bU\bn_0$, $\bn_k=\bU\bn_{0k}$, $k=1,\ldots,K$,
$\btheta_{11}=\bR_J\bq+\bR_0\bp\in\C^{t\times 1}$, and $\btheta_2=\bR_1\bp\in\C^{r\times 1}$.
On the other hand, when $H_0$ is in force, the transformed data can be written as
\begin{align}
\bz &= \bU\bor=\bU\bJ\bq + \bn = \bE_t\bR_J\bq + \bn = \bE_t\btheta_{10}+\bn,
\\
\bz_k &= \bU\bor_k = \bn_k,
\end{align}
where $\btheta_{10}=\bR_J\bq\in\C^{t\times 1}$. 
Gathering the above results, we can rewrite problem (\ref{eq:hypothesistest00}) in canonical form
\be
\left\{
\begin{array}{ll}
H_0:\left\{\begin{array}{ll}
\bz = \bE_t\btheta_{10} + \bn, \phantom{\alpha \bs+\bs} &  \\
\bz_k = \bn_{k}, &  k=1, \ldots, K,
\end{array}\right.
\\
\\
H_1:\left\{\begin{array}{ll}
\bz = \bE_t \btheta_{11} + \bE_r \btheta_2  +  \bn, &  \\
\bz_k = \bn_{k}, &  k=1, \ldots, K,
\end{array}\right.
\end{array}
\right.
\label{eq:hypothesistest01}
\ee
where $\bn$, $\bn_k$, $k=1,\ldots,K$, are iid complex normal random vectors with zero mean and covariance matrix 
$\bM=\bU\bM_0\bU^\dag$.
Two important remarks are now in order. First, notice that $\langle\bH\rangle$ receives part of the interference energy and
that $\langle\bJ\rangle$ includes useful signal components since the columns $\bH$ and those of $\bJ$ are 
not orthogonal but only linearly independent.
This reciprocity becomes more evident after applying the transformations which lead to the canonical form.
Second, in problem (\ref{eq:hypothesistest00}) the relevant parameter to decide for the presence of a target is $\bp$. Otherwise stated,
if the $H_1$ hypothesis holds true, then $\|\bp\| > 0$, while $\|\bp\|=0$ under the disturbance-only hypothesis ($H_0$).
As a consequence, since $\bR_1$ is nonsingular, problem (\ref{eq:hypothesistest01}) is equivalent to
\be
\left\{
\begin{array}{ll}
H_0:\|\btheta_2\| = 0,
\\
H_1:\|\btheta_2\| > 0,
\end{array}
\right.
\ee
which partitions the parameter space, $\bTheta$ say, as
\be
\bTheta = \underbrace{\{ \bzero \}}_{\bTheta_0} \cup \underbrace{\{ \btheta_2\in\C^{r\times 1}: \ \| \btheta_2 \|>0 \}}_{\bTheta_1}.
\ee
In the following, we can look for decision rules sharing some invariance with respect 
to those parameters (nuisance parameters, in this case $\bM$, $\btheta_{11}$ 
under $H_1$, and $\btheta_{10}$ under $H_0$) which are irrelevant to the specific decision problem.
To this end, we exploit the Principle of Invariance \cite{lehmann}, whose main idea consists
in finding transformations that properly cluster data without altering
\begin{itemize}
\item the formal structure of the hypothesis testing problem given by $H_0:\|\btheta_2\| = 0$, $H_1:\|\btheta_2\| > 0$;
\item the Gaussian model assumption under the two hypotheses;
\item the subspace containing the useful signal components.
\end{itemize}
Before introducing the transformation group that fulfills the above requirements, let us define
$\bZ_s = [\bz_{1} \ \ldots \ \bz_{K}]$,
\be
\bz = \left[
\begin{array}{c}
\bz_{1}
\\
\bz_{2}
\\
\bz_{3}
\end{array}
\right], \ 
\bS = \bZ_s\bZ_s^T=\left[
\begin{array}{ccc}
\bS_{11} & \bS_{12} & \bS_{13}
\\
\bS_{21} & \bS_{22} & \bS_{23}
\\
\bS_{31} & \bS_{32} & \bS_{33}
\end{array}
\right],
\label{eq:partition}
\ee
where $\bz_{1}\in\C^{t \times 1}$, $\bz_{2}\in\C^{r\times 1}$, $\bz_{3}\in\C^{(N-m)\times 1}$, and
$\bS_{ij}$, $(i,j)\in\{1,2,3\}\times\{1,2,3\}$, is a submatrix whose dimensions can be obtained replacing 1, 2, and 3 with $t$, $r$, and $N-m$, respectively\footnote{Hereafter, in the case $m=N$, the ``$3$-components'' are no longer present in the partitioning.}.
Moreover, Fisher-Neyman factorization theorem ensures that deciding from the raw data $(\bz,\bZ_s)$ is tantamount to
deciding from a sufficient statistic $(\bz, \bS)$ \cite{Muirhead}.

Now, denote by $\cG\cL(N)$ the linear group of the $N\times N$ nonsingular matrices and introduce the following sets
\begin{multline}
\cG=\Bigg\{
\bG = \left[
\begin{array}{ccc}
\bG_{11} & \bG_{12} & \bG_{13}
\\
\bzero & \bG_{22} & \bG_{23}
\\
\bzero & \bzero & \bG_{33}
\end{array}
\right]\in\cG\cL(N): 
\\
\bG_{11}\in\cG\cL(t), \ \bG_{22}\in\cG\cL(r), \ \bG_{33}\in\cG\cL(N-m)
\Bigg\}
\end{multline}
and
\be
\cF=\left\{
\bff = \left[
\begin{array}{c}
\bff_{11} 
\\
\bzero
\end{array}
\right]\in\C^{N\times 1}: \bff_{11}\in\C^{t\times 1}
\right\}.
\ee
It follows that the set $\cG\times\cF$ together with the operation $\circ$ defined by
\be
(\bG_1,\bff_1)\circ(\bG_2,\bff_2) = (\bG_2\bG_1,\bG_2\bff_1+\bff_2),
\ee
form a group, $\cL$ say\footnote{Observe that 
\begin{itemize}
\item $\cL$ is closed with respect to the operation $\circ$;
\item $\forall (\bG_1,\bff_1)$, $(\bG_2,\bff_2)$, and $(\bG_3,\bff_3)\in\cL$: 
$[(\bG_1,\bff_1)\circ(\bG_2,\bff_2)]\circ(\bG_3,\bff_3)=(\bG_1,\bff_1)\circ[(\bG_2,\bff_2)\circ(\bG_3,\bff_3)]$ (Associative Property);
\item there exists and it is a unique $(\bG_I,\bff_I)\in\cL$ such that $\forall (\bG,\bff)\in\cL$: $(\bG_I,\bff_I)\circ(\bG,\bff)=(\bG,\bff)\circ(\bG_I,\bff_I)
=(\bG,\bff)$ (Identity Element);
\item $\forall (\bG,\bff)\in\cL$ there exists $(\bG_{-1},\bff_{-1})\in\cL$ such that $(\bG_{-1},\bff_{-1})\circ(\bG,\bff)
=(\bG,\bff)\circ(\bG_{-1},\bff_{-1})=(\bG_I,\bff_I)$ (Inverse Element).
\end{itemize}}, 
that leaves the hypothesis testing problem (\ref{eq:hypothesistest01}) invariant under the action 
$l$ defined by
\be
l(\bz,\bS)=(\bG\bz+\bff,\bG\bS\bG^\dag), \quad \forall (\bG,\bff)\in\cL.
\ee
The proof of the above statement is given in Appendix \ref{appendix:InvariantTransformations}.
Moreover, it is important to point out that $\cL$ preserves the family of distributions 
and, at the same time, includes those transformations which are relevant from the practical point of view as they 
allow to claim the CFAR property (with respect to $\bM$ and $\bq$) as a consequence of the invariance.

Summarizing, we have identified a group of transformations $\cL$ which leaves unaltered
the decision problem under consideration.
As a consequence,
it is reasonable to find decision rules that are invariant under $\cL$.
Toward this goal, the Principle of Invariance is invoked because it
allows to construct statistics that organize data into distinguishable equivalence classes. Such functions
of data are called maximal invariant statistics and, given the group of transformations, every invariant test may be written
as a function of a maximal invariant \cite{Scharf-book}.

The following proposition provides the expression of a maximal invariant for the problem at hand.
To this end, it is necessary to distinguish between $m<N$ and $m=N$; the latter equality implies that the ``3-components'', namely
$\bz_3$, $\bS_{13}$, $\bS_{23}$, $\bS_{33}$, $\bS_{31}$, $\bS_{32}$, $\bG_{13}$, $\bG_{23}$, and $\bG_{33}$,
are no longer present in the partitioned data matrices and vectors. 

\begin{prop}
\label{proposition:derivationMaxInv}
A maximal invariant statistic with respect to $\cL$ for problem (\ref{eq:hypothesistest01}) is given by
\be
\bt_1(\bz,\bS)=
\left\{
\begin{array}{ll}
\left[
\begin{array}{c}
\bz_{2.3}^\dag\bS_{2.3}^{-1}\bz_{2.3}
\\
\bz_3^\dag\bS_{33}^{-1}\bz_3
\end{array}
\right] & m<N,
\\
\bz_{2}^\dag\bS_{22}^{-1}\bz_{2} & m=N,
\end{array}
\right.
\label{eq:maxInv_00}
\ee
where $\bz_{2.3}=\bz_2-\bS_{23}\bS_{33}^{-1}\bz_3$ and $\bS_{2.3}=\bS_{22}-\bS_{23}\bS_{33}^{-1}\bS_{32}$.
\end{prop}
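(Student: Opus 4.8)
The proof naturally splits into verifying the two defining properties of a maximal invariant: invariance of $\bt_1$ under every action $l$ with $(\bG,\bff)\in\cL$, and maximality, i.e.\ that two data points with the same value of $\bt_1$ necessarily lie in a common orbit. I would treat $m<N$ in detail and recover $m=N$ as the degenerate specialization in which all ``$3$''-components disappear.

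For invariance, the plan is to track how the four blocks entering $\bt_1$ transform. Writing $\bG\bz+\bff$ and $\bG\bS\bG^\dag$ blockwise and using that $\bff$ and the first block-row of $\bG$ touch only the ``$1$''-components, one reads off $\bz_3\mapsto\bG_{33}\bz_3$ and $\bS_{33}\mapsto\bG_{33}\bS_{33}\bG_{33}^\dag$, so that $\bz_3^\dag\bS_{33}^{-1}\bz_3$ is immediately unchanged. The nontrivial point is the Schur-complement pair: I would form $\bS_{23}\bS_{33}^{-1}$ after transformation and show, after the $\bG_{33}^\dag\bG_{33}^{-\dag}$ cancellations, that $\bz_{2.3}\mapsto\bG_{22}\bz_{2.3}$ and $\bS_{2.3}\mapsto\bG_{22}\bS_{2.3}\bG_{22}^\dag$, with the cross terms in $\bG_{23}$ cancelling exactly. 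Both quadratic forms are then manifestly invariant since $\bG_{22}^\dag\bG_{22}^{-\dag}=\bI_r$. This reduces the whole action, as far as $\bt_1$ is concerned, to two independent general-linear group actions $(\bx,\bA)\mapsto(\bC\bx,\bC\bA\bC^\dag)$ on $(\bz_{2.3},\bS_{2.3})$ and on $(\bz_3,\bS_{33})$, for each of which $\bx^\dag\bA^{-1}\bx$ is the classical invariant.

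For maximality I would exhibit an explicit orbit representative depending on $(\bz,\bS)$ only through $\bt_1$ and reduce any admissible point to it; two points sharing $\bt_1$ then share a representative and hence one orbit. Using $\bS$ positive definite (guaranteed by $K\ge N$), the reduction proceeds bottom-up: first pick $\bG_{33}$ whitening $\bS_{33}$ to $\bI_{N-m}$ and aligning $\bz_3$ to $a\,\boe_1$ with $a^2=\bz_3^\dag\bS_{33}^{-1}\bz_3$; next take $\bG_{23}=-\bS_{23}$ (all other blocks trivial) to annihilate $\bS_{23}$, which leaves the ``$3$''-block untouched and makes $\bS_{2.3}=\bS_{22}$, $\bz_{2.3}=\bz_2$; then choose $\bG_{22}$ whitening $\bS_{22}$ to $\bI_r$ and aligning $\bz_2$ to $b\,\boe_1$ with $b^2=\bz_{2.3}^\dag\bS_{2.3}^{-1}\bz_{2.3}$; finally, with the $2$- and $3$-blocks frozen, use $\bG_{12},\bG_{13}$ to clear $\bS_{12},\bS_{13}$, then $\bG_{11}$ to whiten the remaining Schur complement to $\bI_t$, and $\bff_{11}$ to set $\bz_1=\bzero$. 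The resulting representative, namely $\bS=\bI_N$ together with $\bz_1=\bzero$, $\bz_2=b\,\boe_1$, and $\bz_3=a\,\boe_1$, depends only on $(b^2,a^2)=\bt_1$, as required.

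The routine part is the blockwise algebra in the invariance step; the delicate part, which I would write out most carefully, is checking that the bottom-up reduction is self-consistent—each normalization must survive all subsequent group elements. This is precisely why the order (block $3$, then the Schur block $2.3$, then block $1$) and the strictly block-upper-triangular form of $\bG$ matter: annihilating $\bS_{23}$ before acting with $\bG_{22}$ prevents the off-diagonal parameters from reintroducing terms already set to canonical values, and confining $\bff$ to the ``$1$''-block ensures the target-direction statistics are never perturbed. For $m=N$ the same scheme applies verbatim after deleting every ``$3$''-labelled object, with $\bz_2^\dag\bS_{22}^{-1}\bz_2$ playing the single-scalar role.
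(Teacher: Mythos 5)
Your proposal is correct, and while the invariance half is essentially the paper's blockwise algebra (the paper works with the aggregate $\bz_{23}^\dag\bS_2^{-1}\bz_{23}=\bz_3^\dag\bS_{33}^{-1}\bz_3+\bz_{2.3}^\dag\bS_{2.3}^{-1}\bz_{2.3}$ and shows both summands are unchanged, whereas you show directly the equivariance $\bz_{2.3}\mapsto\bG_{22}\bz_{2.3}$, $\bS_{2.3}\mapsto\bG_{22}\bS_{2.3}\bG_{22}^\dag$ --- the two computations are interchangeable), your maximality argument is genuinely different from the paper's. The paper proves maximality by \emph{directly constructing} the group element carrying $(\bar{\bz},\bar{\bS})$ to $(\bz,\bS)$: from equality of the two quadratic forms it extracts unitary matrices $\bU_{2.3},\bU_3$ linking the whitened vectors, assembles the lower-right block $\bG_3=(\bW\bP)^{-1}\bU_1\bar{\bW}\bar{\bP}$ from the block-LDU factorizations of $\bS_2^{-1}$ and $\bar{\bS}_2^{-1}$, and then solves the matrix system coming from $\bG^{-1}\bS=\bar{\bS}\bG^\dag$ to obtain $\bG_1$, $\bG_2$, and $\bff_{11}$ in closed form. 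You instead use the orbit-representative technique: reduce an arbitrary $(\bz,\bS)$ with $\bS$ positive definite to the canonical point $\bS=\bI_N$, $\bz_1=\bzero$, $\bz_2=b\,\boe_1$, $\bz_3=a\,\boe_1$ with $(b^2,a^2)=\bt_1(\bz,\bS)$, and conclude via the group property that equal invariants imply a common orbit. Your route buys conceptual economy --- no system of matrix equations to solve, and the canonical form makes the orbit geometry (and hence why the invariant is two-dimensional) transparent; its only delicate point, which you correctly flag, is that the bottom-up ordering and the vanishing of $\bS_{23}$ before acting with $\bG_{22}$ keep earlier normalizations intact. The paper's route buys an explicit formula for the connecting transformation, e.g.\ $\bG_1=(\bS_1-\bS_3\bS_2^{-1}\bS_3^\dag)^{1/2}(\bar{\bS}_1-\bar{\bS}_3\bar{\bS}_2^{-1}\bar{\bS}_3^\dag)^{-1/2}$, which verifies membership in $\cL$ without any iteration. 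Both arguments rest on the same prerequisites ($\bS\succ 0$, nonsingular diagonal blocks of $\bG$), and your observation that the $m=N$ case follows by deleting all ``3''-labelled objects matches the paper's treatment.
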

\begin{proof}
See Appendix \ref{appendix:maximalInvariantsProof}.
\end{proof}
Some remarks are now in order.
\begin{itemize}
\item Observe that when $m<N$ the maximal invariant statistic is a 2-dimensional vector where the second component represents 
an ancillary part, whose distribution does not depend on which hypothesis is in force.
\item It is important to compare (\ref{eq:maxInv_00}) with the maximal invariant derived in \cite{raghavanSubspace} assuming the absence
of subspace structured interference. Interestingly, they share a similar structure/dimensionality but for the presence of
a projection operation (onto the orthogonal complement of the rotated $\bJ$) which amounts to cleaning the received data 
vectors from the structured interference.
\item Finally, note that (see {\em Theorem 6.2.1} of \cite{lehmann}) every invariant test may be written as a function 
of (\ref{eq:maxInv_00}) (see Figure \ref{fig:trasformationInvariantTest} for a schematic representation).
\end{itemize}

\section{Distribution of the Maximal Invariants}
In this section, we provide the statistical characterization of the maximal invariant for the case $m<N$ and, then, we give a corollary
referring to $m=N$. To this end,
firstly we show that the maximal invariant can be written as a function of whitened random vectors and matrices and then
we find a suitable stochastic representation by means of one-to-one transformations. 
Finally, the probability density functions (pdfs) in the invariant domain
will be used in Section IV to
construct the Likelihood Ratio Test (LRT) of $H_0$ versus $H_1$, also referred to as the MPI Detector (MPID).

Let us invoke the invariance principle and consider the following transformation $(\bW,\bff_0)\in\cL$
\begin{align}
&\bw=\bV\bz+\bff_0=\left[
\begin{array}{c}
\bw_1
\\
\bw_2
\\
\bw_3
\end{array}
\right] 
\\
&\bS_0=\bV\bS\bV^\dag=\left[
\begin{array}{ccc}
\bS_{011} & \bS_{012} & \bS_{013}
\\
\bS_{021} & \bS_{022} & \bS_{023}
\\
\bS_{031} & \bS_{032} & \bS_{033}
\end{array}
\right],
\end{align}
where
\be
\bV=\left[
\begin{array}{ccc}
\bV_{11} & \bV_{12} & \bV_{13}
\\
\bzero & \bV_{22} & -\bM_{22}\bM_{23}\bM_{33}^{-1}
\\
\bzero & \bzero & \bM_{33}^{-1/2}
\end{array}
\right].
\label{eq:whiteningW}
\ee
In (\ref{eq:whiteningW}), $\bV_{22}=(\bM_{22}-\bM_{23}\bM_{33}^{-1}\bM_{23}^\dag)^{-1/2}$, 
$\bV_{ij}$, $i=1$, $j=1,2,3$, are matrices of proper dimensions, while
$\bM_{22}\in\C^{r\times r}$, $\bM_{23}\in\C^{r\times (N-m)}$, and $\bM_{33}\in\C^{(N-m)\times (N-m)}$ 
are obtained partitioning $\bM$ in the same way as $\bS$.

Now, observe that under $H_i$, $i=0,1$,
\begin{align}
\left[
\begin{array}{c}
\bw_2
\\
\bw_3
\end{array}
\right]
&=\bV_2
\left[
\begin{array}{c}
\bz_2
\\
\bz_3
\end{array}
\right]\sim\cC\cN(i\bV_2\bE_r\btheta_2,\bI_{N-t})
\label{eq:distribuzione_w}
\\
\left[
\begin{array}{cc}
\bS_{022} & \bS_{023}
\\
\bS_{032} & \bS_{033}
\end{array}
\right]
&=\bV_2
\left[
\begin{array}{cc}
\bS_{22} & \bS_{23}
\\
\bS_{32} & \bS_{33}
\end{array}
\right]\bV_2^\dag\nonumber
\\
&\sim \cC\cW_{N-t}(K,\bI_{N-t}),
\label{eq:distribuzione_S0}
\end{align}
where
\be
\bV_2=\left[
\begin{array}{cc}
\bV_{22} & -\bM_{22}\bM_{23}\bM_{33}^{-1}
\\
\bzero & \bM_{33}^{-1/2}
\end{array}
\right].
\ee
Thus, we can recast $\bt_1(\bz,\bS)$ in terms of whitened quantities, more precisely, as
\begin{align}
&m_1=(\bz_2-\bS_{23}\bS_{33}^{-1}\bz_3)^\dag (\bS_{22}-\bS_{23}\bS_{33}^{-1}\bS_{32})^{-1} \nonumber
\\
&\times (\bz_2-\bS_{23}\bS_{33}^{-1}\bz_3) \nonumber
\\
&=(\bw_2-\bS_{023}\bS_{033}^{-1}\bw_3)^\dag (\bS_{022}-\bS_{023}\bS_{033}^{-1}\bS_{032})^{-1} \nonumber
\\
&\times (\bw_2-\bS_{023}\bS_{033}^{-1}\bw_3),
\\
&m_2=\bz_3^\dag\bS_{33}^{-1}\bz_3 = \bw_3^\dag\bS_{033}^{-1}\bw_3.
\end{align}
Finally, apply a one-to-one transformation to (\ref{eq:maxInv_00}) to obtain the following stochastic representation for the maximal invariant
\be
\bt_2(\bz,\bS)=\bt_2(\bw,\bS_0)=\left[
\begin{array}{c}
\ds\frac{1}{\ds1+\frac{m_1}{1+m_2}}
\\
\vspace{-4mm}
\\
\ds\frac{1}{\ds1+m_2}
\end{array}
\right]
=\left[
\begin{array}{c}
p_1
\\
p_2
\end{array}
\right].
\ee
Summarizing, we have all the elements to provide the probability density functions (pdfs) of the maximal invariant under both hypotheses.
\begin{prop}
\label{prop:distributionMAXINV}
The joint pdf of $p_1$ and $p_2$ under the $H_i$ hypothesis, $i=0,1$, and assuming $m<N$ has the following expression
\begin{multline}
f_{p_1,p_2}(x,y;H_i)=f_{\beta}(x;K-(N-t)+1,r,0)
\\
\times f_{\beta}(y;K-(N-t)+r+1,N-t-r,0)e^{-i \ \mbox{\scriptsize \em SINR}  \ x y}
\\
\times \sum_{k=0}^{K-(N-t)+1}\binom{K-(N-t)+1}{k}
\\
\times \frac{(r-1)!}{(r+k-1)!}[\mbox{\em SINR} \  y (1-x)]^k,
\end{multline}
where $(x,y)\in \{(0 \ 1] \times (0 \ 1] \}$,
\be
\mbox{\em SINR}=\btheta_2^\dag ( \bM_{22}-\bM_{23}\bM_{33}^{-1}\bM_{32} )^{-1} \btheta_2
\ee
is the Signal-to-Noise-plus-Interference Ratio (SINR), and
$f_{\beta}(x;n,m,\delta)$ is the pdf of a \cite{Kelly-TR,BOR-Morgan}
\begin{itemize}
\item a random variable following a complex noncentral beta distribution with $n,m$ complex degrees of freedom and noncentrality parameter $\delta$, 
if $\delta>0$;
\item a random variable following a complex central beta distribution with $n,m$ complex degrees of freedom, if $\delta=0$.
\end{itemize}
\end{prop}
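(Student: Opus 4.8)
The plan is to turn the joint law of $(p_1,p_2)$ into two successive applications of a single classical fact: the distribution of a Kelly-type \emph{loss factor} $1/(1+\bx^\dag\bS^{-1}\bx)$ formed from a complex Gaussian vector $\bx$ and an independent complex Wishart matrix $\bS$. The whitening transformation $(\bV,\bff_0)$ has already reduced the data to a tractable Gaussian/Wishart configuration, and the trick is to condition on the ``3-block'' so that the two components decouple into an ancillary central beta ($p_2$) and a conditional noncentral beta ($p_1$).

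First I would record the distributional consequences of the whitening. From the mean $i\bV_2\bE_r\btheta_2$ and the identity covariance in (\ref{eq:distribuzione_w}), under $H_i$ one has $\bw_2\sim\cC\cN(i\bV_{22}\btheta_2,\bI_r)$ and $\bw_3\sim\cC\cN(\bzero,\bI_{N-m})$ independently, while the $(2,3)$ block of $\bS_0$ is $\cC\cW_{N-t}(K,\bI_{N-t})$ and independent of the vectors. Since $\bM_{23}^\dag=\bM_{32}$, the covariance factor satisfies $\|\bV_{22}\btheta_2\|^2=\btheta_2^\dag(\bM_{22}-\bM_{23}\bM_{33}^{-1}\bM_{32})^{-1}\btheta_2=\mbox{SINR}$, so the entire noncentrality is carried by this single scalar. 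I would then invoke the complex Wishart partition theorem (the complex analogue of the result in \cite{Muirhead}): the Schur complement $\bC=\bS_{022}-\bS_{023}\bS_{033}^{-1}\bS_{032}\sim\cC\cW_r(K-(N-m),\bI_r)$ is independent of the pair $(\bS_{033},\bS_{023}\bS_{033}^{-1})$, and, conditionally on $\bS_{033}$, the block $\bS_{023}\bS_{033}^{-1}$ is complex matrix normal whose rows are independent $\cC\cN(\bzero,\bS_{033}^{-1})$.

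The key step is to condition on the 3-block $(\bw_3,\bS_{033})$. On this conditioning $m_2=\bw_3^\dag\bS_{033}^{-1}\bw_3$ is fixed, and $p_2=1/(1+m_2)$ depends on the data only through $\bw_3\sim\cC\cN(\bzero,\bI_{N-m})$ and the independent $\bS_{033}\sim\cC\cW_{N-m}(K,\bI_{N-m})$; hence $p_2$ has the same law under both hypotheses (consistent with the earlier observation that the second component of the maximal invariant is ancillary) and, by the loss-factor result \cite{Kelly-TR,BOR-Morgan}, follows $\cC\beta_{K-(N-m)+1,\,N-m}(0)=\cC\beta_{K-(N-t)+r+1,\,N-t-r}(0)$, the claimed marginal of $p_2$. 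Next, writing $\bw_{2.3}=\bw_2-\bS_{023}\bS_{033}^{-1}\bw_3$, the conditional law of $\bS_{023}\bS_{033}^{-1}\bw_3$ is $\cC\cN(\bzero,m_2\bI_r)$, so $\bw_{2.3}\mid(\bw_3,\bS_{033})\sim\cC\cN(i\bV_{22}\btheta_2,(1+m_2)\bI_r)$ and is independent of $\bC$. Rescaling by $\sqrt{1+m_2}$ gives $m_1/(1+m_2)=\tilde\bw^\dag\bC^{-1}\tilde\bw$ with $\tilde\bw\sim\cC\cN(\tilde\bmu,\bI_r)$ and $\|\tilde\bmu\|^2=i\,\mbox{SINR}/(1+m_2)=i\,\mbox{SINR}\,p_2$.

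At this point $p_1=1/(1+\tilde\bw^\dag\bC^{-1}\tilde\bw)$ is again a loss factor, now of a dimension-$r$ Gaussian vector against the independent $\bC\sim\cC\cW_r(K-(N-m),\bI_r)$, but with conditional noncentrality $i\,\mbox{SINR}\,p_2$. Applying the loss-factor distribution once more \cite{Kelly-TR,BOR-Morgan} yields, conditionally on $p_2=y$,
\be
p_1\mid p_2=y\ \sim\ \cC\beta_{K-(N-t)+1,\,r}\big(i\,\mbox{SINR}\,y\big),
\ee
the degrees of freedom collapsing to $K-(N-m)-r+1=K-(N-t)+1$ and $r$. Multiplying this conditional density by the marginal of $p_2$ and substituting the finite-series form of the complex noncentral beta pdf, namely $f_\beta(x;n,m,\delta)=f_\beta(x;n,m,0)\,e^{-\delta x}\sum_{k=0}^{n}\binom{n}{k}\frac{(m-1)!}{(m+k-1)!}[\delta(1-x)]^k$ with $n=K-(N-t)+1$, $m=r$, and $\delta=i\,\mbox{SINR}\,y$, reproduces the stated joint pdf term by term, since $e^{-\delta x}=e^{-i\,\mbox{SINR}\,xy}$. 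I expect the main obstacle to be this conditioning step: carefully justifying, through the complex Wishart partition, that $\bw_{2.3}$ inherits the inflated covariance $(1+m_2)\bI_r$ while remaining independent of $\bC$, so that the noncentrality factors cleanly as $i\,\mbox{SINR}\,p_2$; once that is in place the remainder is routine manipulation of known beta densities.
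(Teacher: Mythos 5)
Your proposal is correct and follows essentially the same route as the paper's Appendix C: factor the joint density as $f_{p_1|p_2}\cdot f_{p_2}$, obtain the ancillary central-beta marginal of $p_2$ from the whitened $(\bw_3,\bS_{033})$ pair, and, conditioning on the ``3-components'', show that $p_1$ is complex noncentral beta with noncentrality $\mbox{SINR}\,p_2$ before inserting the finite-series form of that pdf. The only differences are cosmetic: you spell out explicitly (via the Wishart partition theorem and the conditional law of $\bS_{023}\bS_{033}^{-1}\bw_3$) the covariance inflation $(1+m_2)\bI_r$ and the independence from the Schur complement, which the paper delegates to Theorems A.11 and A.13 of \cite{BOR-Morgan}, and your placement of the hypothesis index $i$ inside the noncentrality $\delta=i\,\mbox{SINR}\,y$ is actually the internally consistent reading of the paper's formula.
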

\begin{proof}
See Appendix \ref{appendix:ProofStatisticalCharacterization}.
\end{proof}

\begin{corol}
In the case $m=N$, the pdf of $p_3=1/(1+\bz_2^\dag\bS_{22}^{-1}\bz_2)$ under the $H_i$ hypothesis, $i=0,1$, is given 
by\footnote{Observe that $p_3$ obeys the complex beta distribution with $K-r+1,r$ complex degrees of freedom.}
\begin{multline}
f_{p_3}(x;H_i)=f_{\beta}(x;K-r+1,r,0)e^{-i \ \mbox{\scriptsize \em SINR}  \ x}
\\
\times \sum_{k=0}^{K-r+1}\binom{K-r+1}{k}\frac{(r-1)!}{(r+k-1)!}[\mbox{\em SINR} \  (1-x)]^k,
\end{multline}
where $x\in (0 \ 1]$ and $\mbox{\em SINR}=\btheta_2^\dag \bM_{22}^{-1} \btheta_2$.
\end{corol}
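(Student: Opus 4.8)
The plan is to obtain this Corollary as the degenerate specialization of Proposition~\ref{prop:distributionMAXINV} to $m=N$, i.e.\ to $N-t=r$. When $m=N$ the ``$3$-components'' disappear, so in the maximal invariant of Proposition~\ref{proposition:derivationMaxInv} one has $\bz_{2.3}=\bz_2$ and $\bS_{2.3}=\bS_{22}$, and the statistic collapses to the single scalar $\bz_2^\dag\bS_{22}^{-1}\bz_2$. Correspondingly $m_2=\bz_3^\dag\bS_{33}^{-1}\bz_3$ is identically zero, the ancillary component becomes $p_2=1/(1+m_2)\equiv 1$, and $p_3=1/(1+\bz_2^\dag\bS_{22}^{-1}\bz_2)$ is precisely $p_1$ with $m_2=0$. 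Hence it suffices to recover the asserted $f_{p_3}$ from the joint law $f_{p_1,p_2}$ of Proposition~\ref{prop:distributionMAXINV} in the regime $N-t=r$, in which $p_2$ degenerates to a unit point mass at $y=1$.

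To make this precise I would write $f_{p_3}(x;H_i)=\lim_{N-t-r\to 0^+}\int_0^1 f_{p_1,p_2}(x,y;H_i)\,dy$. Setting $N-t=r$, the second beta factor $f_\beta(y;K-(N-t)+r+1,N-t-r,0)$ has vanishing second parameter; as a probability density it converges to $\delta(y-1)$ (its normalization forces all the mass onto the right endpoint as $N-t-r\to 0^+$), while the remaining factors of $f_{p_1,p_2}$ are continuous and bounded in $y$ near $y=1$. The integral therefore selects the value of those factors at $y=1$: the first beta becomes $f_\beta(x;K-r+1,r,0)$, the exponential becomes $e^{-i\,\mbox{\em SINR}\,x}$, and the finite sum becomes $\sum_{k=0}^{K-r+1}\binom{K-r+1}{k}\frac{(r-1)!}{(r+k-1)!}[\mbox{\em SINR}\,(1-x)]^k$, which is exactly the claimed $f_{p_3}(x;H_i)$. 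The last step is to verify the SINR: with $\bM_{23}$ and $\bM_{33}$ absent, $\btheta_2^\dag(\bM_{22}-\bM_{23}\bM_{33}^{-1}\bM_{32})^{-1}\btheta_2$ reduces to $\btheta_2^\dag\bM_{22}^{-1}\btheta_2$, as stated.

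Because $m=N$ is strictly speaking a separate model rather than the boundary $N-t-r=0$ of the $m<N$ analysis, I would also supply a self-contained derivation mirroring Appendix~\ref{appendix:ProofStatisticalCharacterization} with the $3$-block suppressed. Whitening $\bz_2$ via $\bV_{22}=\bM_{22}^{-1/2}$ gives $\bz_2^\dag\bS_{22}^{-1}\bz_2=\bw_2^\dag\bS_{022}^{-1}\bw_2$ with $\bw_2\sim\cC\cN(i\bM_{22}^{-1/2}\btheta_2,\bI_r)$ and $\bS_{022}\sim\cC\cW_r(K,\bI_r)$ independent, so that $\|i\bM_{22}^{-1/2}\btheta_2\|^2=i\,\mbox{\em SINR}$; the pdf of $p_3=1/(1+\bw_2^\dag\bS_{022}^{-1}\bw_2)$ then follows from the distribution theory of adaptive quadratic forms, exactly the one-quadratic-form version of the computation already carried out for Proposition~\ref{prop:distributionMAXINV}. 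I expect the only genuinely delicate point to be the rigorous justification of the $\delta(y-1)$ degeneration in the first route; the self-contained route sidesteps it at the cost of repeating the noncentral quadratic-form integral whose series expansion yields the stated finite sum. I would therefore present the specialization as the main argument and relegate the limit justification (or the equivalent direct computation) to a footnote.
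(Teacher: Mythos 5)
Your proposal is correct, and its second, self-contained route is precisely the paper's own proof: the paper disposes of this corollary in one line (``an application of the results contained in the proof of Proposition~\ref{prop:distributionMAXINV}''), meaning exactly the computation you describe --- whiten $\bz_2$ with $\bM_{22}^{-1/2}$, invoke the Wishart/noncentral-F machinery ({\em Theorems A.11} and {\em A.13} of \cite{BOR-Morgan}) with the 3-block absent, conclude $p_3\sim\cC\beta_{K-r+1,r}(\delta)$ with $\delta^2=i\,\mbox{SINR}$, and expand the noncentral beta pdf into the stated finite sum. Where you depart from the paper is in proposing to make the degenerate-limit argument ($N-t-r\to 0^+$, second beta factor $\to\delta(y-1)$) the primary proof. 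I would advise against that ordering: $N-t-r$ is an integer, the joint pdf of Proposition~\ref{prop:distributionMAXINV} was derived under the standing assumption $m<N$ (the vector $\bz_3$ and matrix $\bS_{33}$ must exist for the conditioning argument in Appendix~\ref{appendix:ProofStatisticalCharacterization} to run), and the $m=N$ problem is a genuinely different statistical model, not a boundary point of the $m<N$ family. The limit is therefore only a consistency check on the formulas, not a derivation of the $m=N$ distribution; as you yourself note, it cannot be made rigorous without the direct computation, at which point the direct computation is the proof. Present the specialization of Appendix~\ref{appendix:ProofStatisticalCharacterization} as the argument and, if you wish, keep the limit as a remark confirming continuity of the two cases.
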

\begin{proof}
It is an application of the results contained in the proof of {\em Proposition \ref{prop:distributionMAXINV}}.
\end{proof}

As final remark, the induced maximal invariant can be easily obtained observing that the pdf of $\bt_2(\bz,\bS)$ depends
on the unknown parameters only through the SINR.

\section{Design and Analysis of Invariant Tests}
The goal of this section is twofold. First of all, we derive invariant decision schemes based on the LRT and its variants. 
In particular, we focus on the MPID, the LMPID (which is derived under the low SINR regime), and the Conditional MPID (CMPID), which are investigated
for the case where the Uniformly MPID (UMPID) does not exist.
Second, we consider some sub-optimum receivers providing the expression of their decision statistics as functions of the maximal invariant.

\subsection{LRT-based Decision Schemes in the Invariant Domain}
In the sequel, we consider the cases $m<N$ and $m=N$.

\subsubsection{Case $m<N$}
As starting point of our analysis, we devise the MPID
that, according to the Neyman-Pearson criterion, is given by the LRT after data compression. 
Specifically, the MPID has the following 
expression
\begin{align}
\tMPID &= 
\left.\frac{f_{p_1,p_2}(x,y;H_1)}{f_{p_1,p_2}(x,y;H_0)}\right|_{x=p_1,y=p_2} \nonumber
\\
&=e^{-\mbox{\scriptsize SINR}  \ p_1 p_2}
\sum_{k=0}^{K-(N-t)+1}\binom{K-(N-t)+1}{k}\nonumber
\\
&\times \frac{(r-1)!}{(r+k-1)!}[\mbox{SINR} \  p_2 (1-p_1)]^k\test\eta
\label{eq:MPID}
\end{align}
where $\eta$ is the threshold\footnote{Hereafter, we use symbol $\eta$ to denote the generic detection threshold.} 
set according to the desired Probability of False Alarm ($P_{fa}$). It is important to observe that
receiver (\ref{eq:MPID}) is clairvoyant and, as a consequence, it is not of great practical interest in radar applications since
it requires the knowledge of the induced maximal invariant. 
Nevertheless, it still represents a noteworthy benchmark, i.e., an upper bound
to the detection performance achievable by any invariant receiver. Before proceeding with the exposition, it is worth pointing out 
that decision scheme (\ref{eq:MPID}) is formally analogous to that found in \cite{raghavanSubspace}, where the optimum detector is devised
assuming a subspace structured signal embedded in random interference only (no subspace interference). 
This is perfectly consistent with the observation that when an additive structured interference
is present, the MPID (\ref{eq:MPID}) is obtained after projecting pre-processed data in canonical form onto 
the orthogonal complement of the interference subspace whose contribution
is thus removed.

Since in many applications the performances become critical for low SINR values, the LMPID in the limit of zero 
SINR could be of interest. Following the lead of \cite{Poor} and \cite{ViaLMPI}, the LMPID is given by
\be
\tLMPID=\frac{\ds \left.\frac{\delta f_{p_1,p_2}(p_1,p_2;H_1)}
{\delta\mbox{SINR}}\right|_{\mbox{\scriptsize SINR}=0}}{f_{p_1,p_2}(p_1,p_2;H_0)}\test\eta.
\label{eq:LMPIDdet}
\ee
The following proposition provides the expression of $\tLMPID$.
\bigskip
\begin{prop}
\label{prop:propLMPIDdet}
The LMPID is the following decision rule
\be
\tLMPID=\frac{K-(N-t)+1}{r}p_2(1-p_1)-p_1p_2\test \eta.
\ee
\end{prop}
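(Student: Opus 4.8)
The plan is to substitute the closed-form joint density of Proposition \ref{prop:distributionMAXINV} directly into definition (\ref{eq:LMPIDdet}) and perform the indicated SINR-differentiation. Writing $a=K-(N-t)+1$ and $S=\mbox{SINR}$ for brevity, note that the $H_1$ density factors as
\[
f_{p_1,p_2}(x,y;H_1)=f_{\beta}(x;a,r,0)\,f_{\beta}(y;a+r,N-t-r,0)\,g(S),
\]
where
\[
g(S)=e^{-S\,xy}\sum_{k=0}^{a}\binom{a}{k}\frac{(r-1)!}{(r+k-1)!}\big[S\,y(1-x)\big]^{k}
\]
collects the only SINR-dependent factors. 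The two beta densities are independent of $S$, so they are common to the numerator and denominator of (\ref{eq:LMPIDdet}) and cancel. Moreover $g(0)=1$, since at $S=0$ only the $k=0$ term of the sum survives and the exponential equals one; hence $f_{p_1,p_2}(x,y;H_0)$ reduces exactly to the product of the two beta densities. The whole computation therefore collapses to evaluating $g'(0)$ at $x=p_1$, $y=p_2$.

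Next I would differentiate $g$ by the product rule, obtaining
\[
g'(S)=-xy\,e^{-S xy}\sum_{k=0}^{a}\binom{a}{k}\frac{(r-1)!}{(r+k-1)!}\big[S\,y(1-x)\big]^{k}+e^{-S xy}\sum_{k=1}^{a}\binom{a}{k}\frac{(r-1)!}{(r+k-1)!}k\,[y(1-x)]^{k}S^{k-1}.
\]
Evaluating at $S=0$, the exponential becomes one; in the first sum only the $k=0$ term survives, contributing $-xy$; in the second sum only the $k=1$ term survives, contributing $\binom{a}{1}\frac{(r-1)!}{r!}\,y(1-x)$. Using $\binom{a}{1}\frac{(r-1)!}{r!}=a/r$, this yields
\[
g'(0)=\frac{a}{r}\,y(1-x)-xy.
\]
Substituting $x=p_1$, $y=p_2$ and $a=K-(N-t)+1$ gives precisely $\tLMPID=\frac{K-(N-t)+1}{r}p_2(1-p_1)-p_1p_2$, which, compared against a threshold $\eta$ through $\test$, is the claimed decision rule.

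There is no real obstacle here: the argument is a routine application of the product rule followed by the observation that evaluation at $S=0$ annihilates every term except the zeroth order term of the first sum and the first order term of the second. The only points requiring mild care are the bookkeeping of which summand survives in each derivative piece and the simplification of the binomial-factorial coefficient to $a/r$; both are elementary. I would relegate the full manipulation to an appendix (the paper points to Appendix \ref{appendix:ProofOfLMPIDdet}), since the substance is entirely captured by the reduction to $g'(0)$ above.
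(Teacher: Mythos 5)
Your proposal is correct and follows essentially the same route as the paper's own proof in Appendix \ref{appendix:ProofOfLMPIDdet}: the SINR-independent beta factors cancel in the ratio, the exponential-times-sum factor is differentiated by the product rule, and evaluation at $\mbox{SINR}=0$ kills all but the $k=0$ term of the first sum and the $k=1$ term of the second, yielding $\frac{K-(N-t)+1}{r}p_2(1-p_1)-p_1p_2$. The bookkeeping, including the simplification $\binom{K-(N-t)+1}{1}\frac{(r-1)!}{r!}=\frac{K-(N-t)+1}{r}$, matches the paper exactly.
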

\begin{proof}
See Appendix \ref{appendix:ProofOfLMPIDdet}.
\end{proof}

As concluding part of this section, we provide the expression of the CLRT, which is the LRT derived under the assumption 
that the ancillary part of the maximal invariant is assigned \cite{lehmann}. The main idea behind
this approach relies on the fact that the ancillary part does not carry any information on the parameter and, hence, can be considered given.
Moreover, it would be of interest to ascertain the existence of a CUMPID, since the UMPID does not exist for the problem at hand when $m<N$.

The test statistic of the CMPID, $\tCLRT$ say, has the same expression as the MPID (see equation (\ref{eq:MPID}))
with the difference that in this case $p_2$ is a constant. Now, observe that $\tCLRT$ is a nonincreasing function of $p_1$. As a matter of fact,
let us evaluate those values of $p_1\in(0,1]$ such that the first derivative of $\tCLRT$ with respect to $p_1$ is less than zero, namely
\begin{align}
\frac{\delta \tCLRT}{\delta p_1}&=-\mbox{SINR} p_2 e^{-\mbox{\scriptsize SINR}  \ p_1 p_2}\nonumber
\\
&\times \sum_{k=0}^{K-(N-t)+1}\binom{K-(N-t)+1}{k}\frac{(r-1)!}{(r+k-1)!}
\nonumber
\\
&\times [\mbox{SINR} \  p_2 (1-p_1)]^k -\mbox{SINR}p_2 e^{- \mbox{\scriptsize SINR}  \ p_1 p_2}\nonumber
\\
&\times \sum_{k=1}^{K-(N-t)+1}\binom{K-(N-t)+1}{k}\frac{(r-1)!}{(r+k-1)!} k 
\nonumber
\\
&\times [\mbox{SINR} \  p_2 (1-p_1)]^{k-1}\leq 0.
\end{align}
The above inequality holds true $\forall p_1\in(0,1]$ and, hence, $\tCLRT$ is a nonincreasing function of $p_1$.
As a consequence of the Karlin-Rubin Theorem, the test
\be
p_1\testINV \eta
\label{eq:CUMPI}
\ee
is UMPI conditionally to $p_2$. Moreover, we can consider the following equivalent form for test (\ref{eq:CUMPI})
\be
\tGLRTb=\frac{1-p_1}{p_1}\test \eta,
\label{eq:CUMPI_equivalent}
\ee
where, as we will show in the next subsection, $\tGLRTb$ coincides with the GLRT in the original data space.

\subsubsection{Case m=N}
Recall that when $m=N$ the maximal invariant statistic obeys the complex beta distribution with $K-r+1,r$ complex degrees of freedom and, hence,
the LRT is given by
\begin{align}
\tLRT &= 
\left.\frac{f_{p_3}(x;H_1)}{f_{p_3}(x;H_0)}\right|_{x=p_3}
=e^{-\mbox{\scriptsize SINR}  \ p_3}
\sum_{k=0}^{K-r+1}\binom{K-r+1}{k}\nonumber
\\
&\times \frac{(r-1)!}{(r+k-1)!}[\mbox{SINR} \  (1-p_3)]^k\test\eta.
\label{eq:UMPID}
\end{align}
Now, following the same steps used to prove that the GLRT is conditionally UMPI, it is easy to show that the test
\be
p_3 \testinv \eta
\ee
is UMPI and statistically equivalent to the Energy Detector (ED) given by
\be
\tED=\frac{1-p_3}{p_3}=\bz_2^\dag\bS_{22}^{-1}\bz_2\test\eta.
\ee
Finally, the following corollary provides the expression of the LMPID when $m=N$.
\begin{corol}
The expression of the LMPID in the case $m=N$ is
\be
\frac{K-r+1}{r}(1-p_3)-p_3\test\eta.
\ee
\end{corol}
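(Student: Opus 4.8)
The plan is to mirror the definition of the LMPID used in the case $m<N$, equation~(\ref{eq:LMPIDdet}), replacing the bivariate density $f_{p_1,p_2}$ by the single-variable density $f_{p_3}$ supplied by the preceding corollary. Accordingly, I would define
\be
\tLMPID=\frac{\ds\left.\frac{\delta f_{p_3}(p_3;H_1)}{\delta\mbox{SINR}}\right|_{\mbox{\scriptsize SINR}=0}}{f_{p_3}(p_3;H_0)}\test\eta,
\ee
so that the entire task reduces to evaluating the numerator. Since this is exactly the one-dimensional counterpart of Proposition~\ref{prop:propLMPIDdet}, the derivation should follow the same steps as its proof in Appendix~\ref{appendix:ProofOfLMPIDdet}, merely specialized to the $m=N$ partitioning.

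First I would note that in $f_{p_3}(x;H_1)$ the central beta factor $f_{\beta}(x;K-r+1,r,0)$ carries no dependence on the SINR, and that the denominator $f_{p_3}(p_3;H_0)$ reduces to precisely this same factor (under $H_0$ the SINR is zero, whence the exponential equals $1$ and the series collapses to its $k=0$ term, which is $1$). Consequently the beta factor cancels identically between numerator and denominator, and the differentiation need only be carried out on
\be
g(\mbox{SINR})=e^{-\mbox{\scriptsize SINR} \, x}\sum_{k=0}^{K-r+1}\binom{K-r+1}{k}\frac{(r-1)!}{(r+k-1)!}[\mbox{SINR}\,(1-x)]^k.
\ee

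Next I would apply the product rule to $g$ and then set SINR$=0$, exploiting that every power $[\mbox{SINR}\,(1-x)]^k$ with $k\geq 2$ and its first derivative vanish at SINR$=0$. Differentiating the exponential leaves only the surviving $k=0$ term of the series, contributing $-x$; differentiating the series leaves only the $k=1$ term, contributing $\binom{K-r+1}{1}\frac{(r-1)!}{r!}(1-x)=\frac{K-r+1}{r}(1-x)$. Adding these contributions, dividing by the cancelled beta factor, and evaluating at $x=p_3$ yields $\frac{K-r+1}{r}(1-p_3)-p_3$, which is the asserted statistic.

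There is no substantive obstacle: the result is a direct specialization of Proposition~\ref{prop:propLMPIDdet}, and the only point demanding care is the bookkeeping of which series terms survive a single differentiation at SINR$=0$, namely that the linear-in-SINR contribution arises solely from the $k=0$ term through the exponential and from the $k=1$ term through the polynomial, all higher-order terms dropping out.
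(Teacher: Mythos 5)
Your proposal is correct and follows essentially the same route as the paper, which simply specializes the derivative-at-zero-SINR computation of Appendix~\ref{appendix:ProofOfLMPIDdet} (the proof of Proposition~\ref{prop:propLMPIDdet}) to the one-dimensional statistic $p_3$. Your bookkeeping of the surviving terms---the $k=0$ term through the exponential giving $-p_3$ and the $k=1$ term of the series giving $\frac{K-r+1}{r}(1-p_3)$---matches the paper's calculation exactly.
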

\begin{proof}
The above result can be easily obtained following the line of reasoning of the proof of {\em Proposition \ref{prop:propLMPIDdet}}.
\end{proof}
It is clear that the LMPID for $m=N$ is statistically equivalent to the ED.

\subsection{Receivers Obtained as Functions of the Maximal Invariant}
Several invariant decision rules can be obtained exploiting different 
rationales based on either solid theoretical paradigms or heuristic criteria.
The basic observation is the property that
any invariant test can be written as a function of the maximal invariant. Thus,
it is possible to device invariant architectures choosing proper functions of the maximal invariant.
Specifically, we focus on the GLRT, which under some mild technical conditions always leads to an invariant detector \cite{KaySPL},
and the so-called two-step GLRT (2S-GLRT) \cite{Kelly-Nitzberg} which consists in devising
a non-adaptive GLRT for known covariance matrix (step one) and, then, in replacing the unknown covariance matrix 
with a proper estimate (step two) to come with a fully adaptive decision scheme.
Again, in the sequel, we distinguish between the cases $m<N$ and $m=N$.

\subsubsection{Case $m<N$}
Let us begin by providing the test statistic of the GLRT, which has the following expression \cite{ABGR-Subspace,BBORS-Direction}
\be
\tGLRT=\frac{1+\bz^\dag\bS^{-1/2} \bP_{S^{-1/2}E_t}^\perp \bS^{-1/2}\bz}
{1+\bz^\dag\bS^{-1/2} \bP_{S^{-1/2}E_{m}}^\perp \bS^{-1/2}\bz},
\ee
where $\bP_A=\bA(\bA^\dag\bA)^{-1}\bA^\dag$ is the projection matrix onto the subspace spanned by the columns of $\bA\in\C^{N\times M}$,
$\bP^\perp_A=\bI_N-\bP_A$, and $\bE_m=[\bE_t \ \bE_r]$. It is tedious but not difficult to show that
\begin{align}
\bz^\dag\bS^{-1/2} \bP_{S^{-1/2}E_t}^\perp \bS^{-1/2}\bz &= \frac{1-p_1p_2}{p_1p_2},
\label{eq:zSPEtSz}
\\
\bz^\dag\bS^{-1/2} \bP_{S^{-1/2}E_{m}}^\perp \bS^{-1/2}\bz &=\frac{1-p_2}{p_2}.
\label{eq:zSPEmSz}
\end{align}
Sumarizing the statistic of the GLRT can be recast as
\be
\tGLRT=\frac{1}{p_1},
\ee
which is clearly statistically equivalent to the left-hand side of (\ref{eq:CUMPI_equivalent}).

On the other hand, the two-step GLRT (2S-GLRT) is given by
\begin{align}
\tTSGLRT&=\bz^\dag\bS^{-1/2} \bP_{S^{-1/2}E_t}^\perp \bS^{-1/2}\bz \nonumber
\\
&- \bz^\dag\bS^{-1/2} \bP_{S^{-1/2}E_{m}}^\perp \bS^{-1/2}\bz =
\frac{1-p_1}{p_1p_2}.
\end{align}
It follows that both receivers are invariant with respect to $\cL$ and, hence, they ensure the CFAR property with respect
to the unknown covariance matrix of the disturbance and $\bq$.

\subsubsection{Case $m=N$}
In this case, the GLRT and the 2S-GLRT share the same test statistic which has the following expression 
\be
\tGLRT=\tTSGLRT=\bz^\dag\bS^{-1/2} \bP_{S^{-1/2}E_t}^\perp \bS^{-1/2}\bz.
\ee
It is not difficult to show that
\be
\bz^\dag\bS^{-1/2} \bP_{S^{-1/2}E_t}^\perp \bS^{-1/2}\bz = \bz_2^\dag\bS_{22}^{-1}\bz_2=\frac{1-p_3}{p_3},
\ee
which is the statistic of the ED.
It follows that, under the assumption $m=N$, the GLRT, the 2S-GLRT, and the LMPID coincide with the ED, which, in turn,
is UMPI.


\subsection{$P_{fa}$ and $P_d$ of the GLRT, the 2S-GLRT, the LMPID, and ED}
This subsection is aimed at the derivation of closed-form expressions 
for the $P_{fa}$ and the Probability of detection ($P_d$) of the GLRT, the 2S-GLRT, and the LMPID.
To this end, we deal with equivalent decision statistics and distinguish between $m<N$ and $m=N$.

\subsubsection{Case $m<N$}
Let us focus on the GLRT and observe that under the $H_0$ hypothesis, $\tGLRTb\sim\cC\cF_{r,K-(N-t)+1}(0)$, while
under $H_1$, $\tGLRTb\sim\cC\cF_{r,K-(N-t)+1}(\delta)$ with $\delta^2=p_2\mbox{SINR}$. 
It follows that the $P_{fa}$ can be expressed as
\begin{align}
P_{fa}^{\mbox{\tiny GLRT}}(\eta)&=1-F(\eta;r,K-(N-t)+1,0)\nonumber
\\
&=\frac{1}{(1+\eta)^{r+K-N+t}}\sum_{l=0}^{r-1}\binom{r+K-N+t}{l}\eta^l,
\end{align}
where $F(x; n, m, \delta)$, $x\geq 0$, is the Cumulative Distribution Function (CDF) of a random variable ruled by the complex central 
(noncentral) F-distribution with $n,m$ complex degrees of freedom if $\delta=0$ ($\delta>0$). 
Following the same line of reasoning as for the $P_{fa}$, it is not difficult to show that $P_d$ is given by
\begin{align}
P_d^{\mbox{\tiny GLRT}}&(\eta, \mbox{SINR}) = 1-\int_{0}^1 F(\eta;r, K-(N-t)+1, \delta(u)) \nonumber
\\
& \times f_\beta(u;K-(N-t)+r+1,N-t-r,0) du,
\end{align}
where $\delta^2(u)=u\mbox{SINR}$.

As to the 2S-GLRT, observe that it can be recast as $\tTSGLRT=\tGLRTb/p_2$, and, hence, $P_{fa}$ and $P_d$ can be easily
derived relying on the above results, more precisely
\begin{align}
P_{fa}^{\mbox{\tiny 2S-GLRT}}(\eta)&=1-\int_{0}^1 F(\eta u;r, K-(N-t)+1, 0) \nonumber
\\
& \times f_\beta(u;K-(N-t)+r+1,N-t-r,0) du,
\end{align}
\begin{multline}
P_{d}^{\mbox{\tiny 2S-GLRT}}(\eta,\mbox{SINR}) =1
-\int_{0}^1 F(\eta u;r, K-(N-t)+1, \delta(u)) 
\\
\times f_\beta(u;K-(N-t)+r+1,N-t-r,0) du.
\end{multline}
In order to provide the expressions of $P_{fa}$ and $P_d$ for the LMPID, let $a=[K-(N-t)+1]/r$ and observe that
\begin{align}
\tLMPID&=a p_2(1-p_1)- p_1p_2=p_1p_2\left[
a\left(\frac{1}{p_1}-1\right)-1
\right]\nonumber
\\
&=p_1p_2\left[
a\tGLRTb-1
\right]=\frac{p_2}{\tGLRTb+1}(a\tGLRTb-1).
\end{align}
This implies that $P_{fa}$ can be written as\footnote{Notice that $-1\leq \tLMPID\leq a$.}
\begin{align}
&P_{fa}^{\mbox{\tiny LMPID}}(\eta)
\\
&=\mbox{P}\left[ \frac{p_2}{\tGLRTb+1}(a\tGLRTb-1) > \eta; H_0 \right]\nonumber
\\
&=\mbox{P}\left[ \tGLRTb(a p_2-\eta) > p_2+\eta; H_0 \right]\nonumber
\\
&=\int_{0}^1 \mbox{P}\left[ \tGLRTb(a u-\eta) > u+\eta | p_2 = u; H_0 \right]\nonumber
\\
&\times f_\beta(u;K-(N-t)+r+1,N-t-r,0) du\nonumber
\\
&=\left\{
\begin{array}{l}
\ds\int_{\eta/a}^1 \mbox{P}\left[ \tGLRTb > \frac{u+\eta}{a u-\eta} | p_2 = u; H_0 \right]
\\
\times f_\beta(u;K-(N-t)+r+1,N-t-r,0) du
\\
\ds+\int_{0}^{\eta/a} \mbox{P}\left[ \tGLRTb < \frac{u+\eta}{a u-\eta} | p_2 = u; H_0 \right]
\\
\times f_\beta(u;K-(N-t)+r+1,N-t-r,0) du,
\\
\mbox{if} \ 0<\eta\leq a,
\\
0, \quad \mbox{if} \ \eta>a,
\\
\ds\int_{0}^1 \mbox{P}\left[ \tGLRTb > \frac{u+\eta}{a u-\eta} | p_2 = u; H_0 \right]
\\
\times f_\beta(u;K-(N-t)+r+1,N-t-r,0) du,
\\
\mbox{if} \ -1\leq\eta\leq 0,
\\
1, \quad \mbox{if} \ \eta <-1,
\end{array}
\right.\nonumber
\\
&=\left\{
\begin{array}{l}
\ds\int_{\eta/a}^1 \left\{1-F\left(\frac{u +\eta}{u a-\eta};r, K-(N-t)+1, 0 \right) \right\}
\\
\times f_\beta(u;K-(N-t)+r+1,N-t-r,0) du
\\
\ds+\int_{0}^{\eta/a} F\left(\frac{u +\eta}{u a-\eta};r, K-(N-t)+1, 0 \right)
\\
\times f_\beta(u;K-(N-t)+r+1,N-t-r,0) du, 
\\
\mbox{if} \ 0<\eta\leq a,
\\
0,  \quad \mbox{if} \ \eta>a
\\
\ds 1-\int_{0}^1 F\left(\frac{u +\eta}{u a-\eta};r, K-(N-t)+1, 0 \right) 
\\
\times f_\beta(u;K-(N-t)+r+1,N-t-r,0) du, 
\\
\mbox{if} \ -1\leq \eta\leq 0,
\\
1, \quad \mbox{if} \ \eta<-1.
\end{array}
\right.
\end{align}
The expression of the $P_d$ is formally analogous to that of the $P_{fa}$ but for the presence of the noncentrality 
parameter in the distribution of $\tGLRTb$, namely
\begin{align}
&P^{\mbox{\tiny LMPID}}_d(\eta,\mbox{SINR})\nonumber
\\
&=\left\{
\begin{array}{l}
\ds\int_{\eta/a}^1 \left\{1-F\left(\frac{u +\eta}{u a-\eta};r, K-(N-t)+1, \delta(u) \right) \right\}
\\
\times f_\beta(u;K-(N-t)+r+1,N-t-r,0) du
\\
\ds+\int_{0}^{\eta/a} F\left(\frac{u +\eta}{u a-\eta};r, K-(N-t)+1, \delta(u) \right)
\\
\times f_\beta(u;K-(N-t)+r+1,N-t-r,0) du, 
\\
\mbox{if} \ 0<\eta\leq a,
\\
0, \quad \mbox{if} \ \eta > a,
\\
\ds 1-\int_{0}^1 F\left(\frac{u +\eta}{u a-\eta};r, K-(N-t)+1, \delta(u) \right) 
\\
\times f_\beta(u;K-(N-t)+r+1,N-t-r,0) du, 
\\
\mbox{if} \ -1\leq \eta\leq 0,
\\
1, \quad \mbox{if} \ \eta < -1.
\end{array}
\right.
\end{align}

\subsubsection{Case $m=N$}
Under the assumption $m=N$, the GLRT, the 2S-GLRT, and the LMPID are statistically equivalent to the ED. For this reason,
the object of this section is the statistical characterization of the ED. Specifically, 
it is easy to show that $\tED$ is ruled by
\begin{itemize}
\item the complex central F-distribution with $r,K-r+1$ complex degrees of freedom under $H_0$;
\item the complex noncentral F-distribution with $r,K-r+1,$ degrees of freedom and noncentrality 
parameter $\delta$ with $\delta^2=\theta_2^\dag\bM_{22}^{-1}\theta_2$ under $H_1$.
\end{itemize}
It follows that
\be
P_{fa}^{\mbox{\tiny ED}}(\eta)=\frac{1}{(1+\eta)^{K}}\sum_{l=0}^{r-1}\binom{K}{l}\eta^l
\ee
and
\be
P_{d}^{\mbox{\tiny ED}}(\eta,\mbox{SINR})=1- F(\eta;r, K-r+1, \delta).
\ee

\subsection{Illustrative Examples}
Figures \ref{fig:fig01}-\ref{fig:fig04} display $P_d$ versus the SINR
for the considered decision schemes.
The curves have been obtained by means the above closed-form formulas which have been computed by standard numerical routines but for
the MPID, whose performance has been obtained via standard Monte Carlo counting techniques. 
More precisely, the thresholds necessary to ensure a preassigned value of
$P_{fa}$ have been evaluated exploiting $100/P_{fa}$ independent trials, while the $P_d$
values are estimated over $5000$ independent trials.
As to the disturbance, it is modeled as an
exponentially-correlated Gaussian vector with covariance matrix $\bM = \sigma^2_n\bI_N + \sigma^2_c\bM_c$,
where $\sigma^2_n>0$ is the thermal noise power, $\sigma^2_c>0$ is the clutter power, and
the $(i,j)$-th element of $\bM_c$ is given by $0.95^{|i-j|}$. 
The clutter-to-noise ratio $\sigma_c^2/\sigma_n^2$ and the interferer-to-noise 
ratio $\|\theta_1\|^2/\sigma_n^2$ are both set to 30 dB with $\sigma^2_n=1$.
Finally, all numerical examples assume $P_{fa}=10^{-4}$.

Figures \ref{fig:fig01} and \ref{fig:fig02} show the performances of the considered detectors assuming $N=8$, $K=12$, and
different values of $(r,t)$. Inspection of the figures highlights that the GLRT, the 2S-GLRT, and the MPID share
practically the same performance, while the LMPID performs poorer. In particular, the latter exhibits a loss of 
about 3 dB at $P_d=0.9$ in Figure \ref{fig:fig01} and does not achieve $P_d=0.9$ for the SNR values considered in Figure \ref{fig:fig02}.
Moreover, the GLRT and the MPID are slightly superior to the 2S-GLRT. 
Finally, it is worth observing that for high values  of $r$,
the $P_d$ curves of the considered detectors move toward high values of SINR, namely 
there is a performance degradation with respect to the case where $r$ is low; 
this is strictly tied to the increase of the number of unknown parameters to be estimated.
In Figures \ref{fig:fig03} and \ref{fig:fig04}, $P_d$ versus SINR is plotted assuming $N=8$ and $K=16$. Inspection of the figures highlights that
the above hierarchy keeps unaltered, while the increased number of secondary data used for estimation purposes decreases 
the loss of the 2S-GLRT with respect to the GLRT. Finally, notice that contrary to the previous case the $P_d$ of the LMPID
can be higher than $0.9$ for SINR values greater than 16 dB when $r=2$ and 18 dB when $r=4$.

\section{Conclusions}
In this paper we have considered adaptive radar detection of a  
subspace signal embedded in two sources of interference: a Gaussian  
component with unknown covariance matrix modeling the joint effect of  
clutter plus receiver noise and a subspace structured part accounting  
for the effect of coherent jammers impinging on the radar antenna. We  
have formulated the problem as a binary hypothesis test and have  
exploited the theory of invariance to characterize adaptive detectors  
enjoying some relevant symmetries. In this context the main achieved  
technical results are:
\begin{itemize}
\item
The identification of a suitable group of transformations leaving the  
considered hypothesis test invariant and the interpretation of the  
group action as a mean to force at the design stage the CFAR property  
with respect to clutter plus noise covariance matrix and jammer  
parameters.
\item
The design and the statistical characterization of a maximal invariant  
statistic which organizes the radar data in equivalence classes and  
significantly realizes a compression of the original observation domain.
\item
The synthesis of the optimum invariant detector and the discussion on  
the existence of the UMPI test. In this respect, we have shown that  
the UMPI detector in general does not exists but for the case where  
the useful signal plus jammer subspace size completely fills the  
dimension of the observation domain.
\item
The synthesis of the LMPID, the CUMPID, and the proof that the latter  
test coincides with the GLRT.
\item
The development of analytic expressions for the performance assessment of some  
practically implementable invariant detectors.
\end{itemize}
Possible future research tracks might concern the possibility to deal  
with a partially homogeneous Gaussian interference through the  
additional invariance under a common scaling of the secondary data, as  
well as the eventuality to force some structure in the covariance  
matrix of the Gaussian interference (as for instance persymmetry \cite{hongbinPersymmetric,Pascal}).  
Last but not least, it would be interesting the extension of the  
entire framework to a non-homogeneous scenario where the training data  
may exhibit different power levels.

\appendices

\section{Invariance of Problem (\ref{eq:hypothesistest01}) with respect to the Group $\cL$}
\label{appendix:InvariantTransformations}

Let $(\bG,\bff)\in\cL$ and observe that, under $H_1$, $\bx=\bG\bz+\bff$ obeys the complex normal distribution with covariance matrix
$\bG\bM\bG^\dag$ and mean
\begin{align}
&\bG(\bE_t\btheta_{11}+\bE_r\btheta_{2}) + \bff=
\left[
\begin{array}{c}
\bG_{11} \btheta_{11}
\\
\bzero
\\
\bzero
\end{array}
\right]
+
\left[
\begin{array}{c}
\bG_{12} \btheta_{2}
\\
\bG_{22} \btheta_{2}
\\
\bzero
\end{array}
\right]\nonumber
\\
&+
\left[
\begin{array}{c}
\bff_{11}
\\
\bzero
\\
\bzero
\end{array}
\right]
=\bE_t\bar{\btheta}_{1}+\bE_r\bar{\btheta}_{2},
\end{align}
where $\bar{\btheta}_{1}=\bG_{11} \btheta_{11} + \bG_{12} \btheta_{2}+\bff_{11}$ and $\bar{\btheta}_{2}=\bG_{22} \btheta_{2}$. Moreover, $\bx$ is independent
of $\bx_k=\bG\bz_k$, $k=1,\ldots,K$, which are iid complex normal vectors with zero mean and covariance matrix $\bG\bM\bG^\dag$. Now, observe that
$\| \bar{\btheta}_{2} \|>0$ if and only if $\|\btheta_{2}\|>0$ (since $\bG_{22}$ is nonsingular).

On the other hand, when $H_0$ is in force, $\bx$ is distributed as under $H_1$ but for the mean, which is given by
\be
\bG\bE_t\btheta_{10}+\bff=
\left[
\begin{array}{c}
\bG_{11}\btheta_{10}+\bff_{11}
\\
\bzero
\\
\bzero
\end{array}
\right]=\bE_t\bar{\btheta}_{10},
\ee
where $\bar{\btheta}_{10}=\bG_{11}\btheta_{10}+\bff_{11}$, whereas $\bx_k$, $k=1,\ldots,K$, are still iid complex normal random vectors with zero mean and
covariance matrix $\bG\bM\bG^\dag$.

Thus, we have shown that the original partition of the parameter space, data distribution, and the structure of the 
subspace containing the useful signal components are preserved after the trasformation $(\bG,\bff)$.

\section{Derivation of the Maximal Invariant Statistics}
\label{appendix:maximalInvariantsProof}

This appendix is devoted to the proof of {\em Proposition \ref{proposition:derivationMaxInv}}.
In particular, we focus on the more challenging case $2\leq m <N$ since $m=N$ can be obtained according
to the same proof method.

Before proceeding, it is worth recalling that a statistic $\bt(\bz, \bS)$ is said to be
maximal invariant statistic with respect to a group of transformations
$\cL$ if and only if \cite{lehmann,Scharf-book}
\be
\bt(\bz, \bS)=\bt[l(\bz, \bS)], \quad \forall l\in\cL,
\label{eq:maxInv1Prop}
\ee
and
\be
\bt(\bz, \bS)=\bt(\bar{\bz}, \bar{\bS})  \\
\Rightarrow \exists \ l\in\cL \ : \  (\bz, \bS)=l(\bar{\bz}, \bar{\bS}).
\label{eq:maxInv2Prop}
\ee
In order to prove (\ref{eq:maxInv1Prop}), we consider the following vector
\be
\left[
\begin{array}{c}
\bz_{23}^\dag\bS_2^{-1}\bz_{23}
\\
\bz_3^\dag\bS_{33}^{-1}\bz_3
\end{array}
\right]
\label{eq:maxInv_01}
\ee
where
\be
\bz_{23}=\left[
\begin{array}{c}
\bz_2
\\
\bz_3
\end{array}
\right], \quad
\bS_{2}=\left[
\begin{array}{cc}
\bS_{22} & \bS_{23}
\\
\bS_{32} & \bS_{33}
\end{array}
\right],
\ee
and 
\be
\bz_{23}^\dag\bS_2^{-1}\bz_{23}=\bz_3^\dag\bS_{33}^{-1}\bz_3+\bz_{2.3}\bS_{2.3}^{-1}\bz_{2.3}.
\ee
Moreover, let us partition the matrix $\bG$ as follows
\be
\bG=\left[
\begin{array}{cc}
\bG_1 & \bG_{2}
\\
\bzero & \bG_3
\end{array}
\right],
\ee
where $\bG_1=\bG_{11}\in\C^{t\times t}$, $\bG_{2}=[\bG_{12} \ \bG_{13}]\in\C^{t\times (N-t)}$,  and 
\be
\bG_{3}=\left[
\begin{array}{cc}
\bG_{22} & \bG_{23}
\\
\bzero & \bG_{33}
\end{array}
\right]\in\C^{(N-t)\times (N-t)}.
\label{eq:blockG3}
\ee
Now, let $(\bar{\bz},\bar{\bS})=l(\bz,\bS)$ and observe that\footnote{We partition $\bar{\bz}$ and $\bar{\bS}$ according to the same rule used for $\bz$ and $\bS$. For this reason in the sequel we omit the definition of the submatrices and subvectors of $\bar{\bz}$ and $\bar{\bS}$.}
\be
\bar{\bz}=\bG\bz + \bff = \left[
\begin{array}{c}
\bG_1\bz_1+\bG_{2}\bz_{23}+\bff_{11}
\\
\bG_{3}\bz_{23}
\end{array}
\right]
\ee
and
\be
\bar{\bS}=\bG\bS\bG^\dag=\left[
\begin{array}{cc}
\bA_1 & \bA_2
\\
\bA_3 & \bG_3\bS_2\bG_3^\dag
\end{array}
\right],
\label{eq:GSG_GSbG}
\ee
where $\bA_i$, $i=1,2,3$, are matrices of proper size and
\be
\bG_3\bS_2\bG_3^\dag=\left[
\begin{array}{cc}
\bB_1 & \bB_2
\\
\bB_3 & \bG_{33}\bS_{33}\bG_{33}^\dag
\end{array}
\right]
\ee
with, in turn, $\bB_i$, $i=1,2,3$ matrices of proper dimensions.
It follows that
\begin{align}
&\left[
\begin{array}{c}
\bar{\bz}_{23}^\dag\bar{\bS}_2^{-1}\bar{\bz}_{23}
\\
\bar{\bz}_3^\dag\bar{\bS}_{33}^{-1}\bar{\bz}_3
\end{array}
\right]
=
\left[
\begin{array}{c}
\bz_{23}^\dag\bG_3^\dag (\bG_3^\dag)^{-1}\bS_2^{-1}\bG_3^{-1}\bG_3\bz_{23}
\\
\bz_3^\dag\bG_{33}(\bG_{33}^\dag)^{-1}\bS_{33}^{-1}\bG_{33}^{-1}\bG_{33}\bz_3
\end{array}
\right]\nonumber
\\
&=
\left[
\begin{array}{c}
\bz_{23}^\dag\bS_2^{-1}\bz_{23}
\\
\bz_3^\dag\bS_{33}^{-1}\bz_3
\end{array}
\right]
\end{align}
and, hence, that
\be
\bz_{2.3}\bS_{2.3}^{-1}\bz_{2.3}=\bar{\bz}_{2.3}\bar{\bS}_{2.3}^{-1}\bar{\bz}_{2.3}.
\ee

The proof of the second property (\ref{eq:maxInv2Prop}) adopts the following rationale: first, we find the submatrix $\bG_3$ and then we construct
$\bff_{11}$ and the remaining blocks of $\bG$. 

Assume that there exist $(\bz,\bS)$ and $(\bar{\bz},\bar{\bS})$ such that
\be
\bt(\bz,\bS)=\left[
\begin{array}{c}
\bz_{2.3}^\dag\bS_{2.3}^{-1}\bz_{2.3}
\\
\bz_{3}^\dag\bS^{-1}_{33}\bz_3
\end{array}
\right]=
\left[
\begin{array}{c}
\bar{\bz}_{2.3}^\dag\bar{\bS}_{2.3}^{-1}\bar{\bz}_{2.3}
\\
\bar{\bz}_{3}^\dag\bar{\bS}^{-1}_{33}\bar{\bz}_3
\end{array}
\right]=\bt(\bar{\bz},\bar{\bS}).
\ee
The above equalities can be recast using the Euclidean norm of a vector, more precisely
\begin{align}
\| \by_{2.3} \|^2 &= \| \bar{\by}_{2.3} \|^2,
\\
\| \by_{3} \|^2 &= \| \bar{\by}_{3} \|^2,
\end{align}
where $\by_{2.3}=\bS_{2.3}^{-1/2}\bz_{2.3}$, $\bar{\by}_{2.3}=\bar{\bS}_{2.3}^{-1/2}\bar{\bz}_{2.3}$, 
$\by_{3}=\bS_{33}^{-1/2}\bz_{3}$, and $\bar{\by}_{3}=\bar{\bS}_{33}^{-1/2}\bar{\bz}_{3}$. As a consequence, there exist $\bU_3\in\C^{(N-m)\times (N-m)}$ and
$\bU_{2.3}\in\C^{r\times r}$ unitary matrices such that
\be
\by_{2.3}=\bU_{2.3}\bar{\by}_{2.3} \quad \mbox{and} \quad \by_3=\bU_3\bar{\by}_{3}.
\ee
Moreover, partition $\bS_2^{-1}$ and $\bar{\bS}_2^{-1}$ as follows
\begin{align}
\bS_2^{-1} &=
\underbrace{
\left[
\begin{array}{cc}
\bI_r & \bzero
\\
-\bS_{33}^{-1}\bS_{32} & \bI_{N-m}
\end{array}
\right]}_{\bP^\dag}
\underbrace{\left[
\begin{array}{cc}
\bS_{2.3}^{-1} & \bzero
\\
\bzero & \bS_{33}^{-1}
\end{array}
\right]}_{\bW^2}\nonumber
\\
&\times \underbrace{\left[
\begin{array}{cc}
\bI_r & -\bS_{23}\bS_{33}^{-1}
\\
\bzero & \bI_{N-m}
\end{array}
\right],}_{\bP}
\label{eq:partitionS2}
\\
\bar{\bS}_2^{-1} &=
\underbrace{
\left[
\begin{array}{cc}
\bI_r & \bzero
\\
-\bar{\bS}_{33}^{-1}\bar{\bS}_{32} & \bI_{N-m}
\end{array}
\right]}_{\bar{\bP}^\dag}
\underbrace{\left[
\begin{array}{cc}
\bar{\bS}_{2.3}^{-1} & \bzero
\\
\bzero & \bar{\bS}_{33}^{-1}
\end{array}
\right]}_{\bar{\bW}^2}\nonumber
\\
&\times \underbrace{\left[
\begin{array}{cc}
\bI_r & -\bar{\bS}_{23}\bar{\bS}_{33}^{-1}
\\
\bzero & \bI_{N-m}
\end{array}
\right],}_{\bar{\bP}}
\label{eq:partitionS2bar}
\end{align}
and note that
\begin{align}
\bW\bP\bz_{23}&=\left[
\begin{array}{c}
\bS_{2.3}^{-1/2}\bz_{2.3}
\\
\bS_{33}^{-1/2}\bz_3
\end{array}
\right]=
\left[
\begin{array}{c}
\by_{2.3}
\\
\by_3
\end{array}
\right],
\\
\bar{\bW}\bar{\bP}\bz_{23}&=\left[
\begin{array}{c}
\bar{\bS}_{2.3}^{-1/2}\bar{\bz}_{2.3}
\\
\bar{\bS}_{33}^{-1/2}\bar{\bz}_3
\end{array}
\right]=
\left[
\begin{array}{c}
\bar{\by}_{2.3}
\\
\bar{\by}_3
\end{array}
\right].
\end{align}
Gathering the above results yields
\begin{align}
&\bW\bP\bz_{23}=\underbrace{\left[
\begin{array}{cc}
\bU_{2.3} & \bzero
\\
\bzero & \bU_3
\end{array}
\right]}_{\bU_1}\bar{\bW}\bar{\bP}\bar{\bz}_{23}
\\
&\Rightarrow
\bz_{23}=(\bW\bP)^{-1}\bU_1\bar{\bW}\bar{\bP}\bar{\bz}_{23},
\label{eq:z23eqz23bar}
\end{align}
where $\bD_3=(\bW\bP)^{-1}\bU_1\bar{\bW}\bar{\bP}$ is an upper block-triangular matrix with the same structure as $\bG_3$ in (\ref{eq:blockG3}).
Besides, from equations (\ref{eq:partitionS2}) and (\ref{eq:partitionS2bar}) it follows that
\begin{align}
& \bW\bP\bS_2\bP^\dag\bW=\bI_{N-m}=\bU_1\bU_1^\dag
\\
&\Rightarrow \bU_1^\dag\bW\bP\bS_2(\bU_1^\dag\bW\bP)^\dag=\bI_{N-m}=\bar{\bW}\bar{\bP}\bar{\bS_2}(\bar{\bW}\bar{\bP})^\dag
\\
& \Rightarrow \bS_2=(\bW\bP)^{-1}\bU_1\bar{\bW}\bar{\bP}\bar{\bS_2}(\bar{\bW}\bar{\bP})^\dag\bU_1^\dag[(\bW\bP)^\dag]^{-1}
\\
&=\bD_3\bar{\bS}_2\bD_3^\dag.
\label{eq:S3G3S3b}
\end{align}
So far, we have constructed the block $\bG_3=\bD_3$ of $\bG$, it still remains to find the other blocks of $\bG$. To this end, 
let $\bS_3=[\bS_{12} \ \bS_{13}]$, $\bS_1=\bS_{11}$ and write (\ref{eq:GSG_GSbG}) as
\begin{align}
&\bG^{-1}\bS=\bar{\bS}\bG^\dag
\\
&\Rightarrow 
\left[  
\begin{array}{cc}
\bG_1^{-1} & -\bG_1^{-1}\bG_{2}\bG_3^{-1}
\\
\bzero & \bG_3^{-1}
\end{array}
\right]\nonumber
\\
& \times
\left[  
\begin{array}{cc}
\bS_{1} & \bS_{3}
\\
\bS_{3}^\dag & \bS_{2}
\end{array}
\right]
=\left[  
\begin{array}{cc}
\bar{\bS}_{1} & \bar{\bS}_{3}
\\
\bar{\bS}_{3}^\dag & \bar{\bS}_{2}
\end{array}
\right]
\left[  
\begin{array}{cc}
\bG_1^\dag & \bzero
\\
\bG_{2}^\dag & \bG_{3}^\dag
\end{array}
\right]
\\
&\Rightarrow 
\left[  
\begin{array}{cc}
\bG_1^{-1}\bS_{1}-\bG_{1}^{-1}\bG_{2}\bG_{3}^{-1}\bS_{3}^\dag & \bG_1^{-1}\bS_{3}-\bG_1^{-1}\bG_{2}\bG_3^{-1}\bS_{2}
\\
\bG_3^{-1}\bS_3^\dag & \bG_3^{-1}\bS_2
\end{array}
\right]\nonumber
\\
&=\left[  
\begin{array}{cc}
\bar{\bS}_{1}\bG_1^\dag+\bar{\bS}_{3}\bG_2^\dag & \bar{\bS}_{3}\bG_3^\dag
\\
\bar{\bS}_{3}^\dag\bG_1^\dag+\bar{\bS}_2\bG_2^\dag & \bar{\bS}_{2}\bG_3^\dag
\end{array}
\right].
\end{align}
The last equation is equivalent to the following system of equations
\be
\left\{
\begin{array}{l}
\bG_1^{-1}\bS_{1}-\bG_{1}^{-1}\bG_{2}\bG_{3}^{-1}\bS_{3}^\dag=\bar{\bS}_{1}\bG_1^\dag+\bar{\bS}_{3}\bG_2^\dag,
\\
\bG_1^{-1}\bS_{3}-\bG_1^{-1}\bG_{2}\bG_3^{-1}\bS_{2}=\bar{\bS}_{3}\bG_3^\dag,
\\
\bG_3^{-1}\bS_3^\dag = \bar{\bS}_{3}^\dag\bG_1^\dag+\bar{\bS}_2\bG_2^\dag.
\end{array}
\right.
\label{eq:systemEquation}
\ee
Replacing the following equality\footnote{Observe that it comes from (\ref{eq:S3G3S3b}).}
\be
\bG_3^{-1}\bS_2 = \bar{\bS}_{2}\bG_3^\dag
\ee
into the second equation of the above system yields
\begin{align}
& \bG_1^{-1}\bS_{3}-\bG_1^{-1}\bG_{2}\bar{\bS}_{2}\bG_3^\dag=\bar{\bS}_{3}\bG_3^\dag
\\
& \Rightarrow \bS_{3}-\bG_{2}\bar{\bS}_{2}\bG_3^\dag=\bG_1\bar{\bS}_{3}\bG_3^\dag
\\
& \Rightarrow (\bG_3^{-1}\bS_{3}^\dag)^\dag=\bG_1\bar{\bS}_{3}+\bG_{2}\bar{\bS}_{2}
\\
& \Rightarrow \bG_3^{-1}\bS_{3}^\dag=\bar{\bS}_{3}^\dag\bG_1^\dag+\bar{\bS}_{2}\bG_{2}^\dag.
\end{align}
Thus, the second matrix equation of (\ref{eq:systemEquation}) is redundant and can be neglected.
From the third equation it stems that
\be
\bG_2^\dag=\bar{\bS}_2^{-1}\left(\bG_3^{-1}\bS_3^\dag -\bar{\bS}_{3}^\dag\bG_1^\dag\right)
\label{eq:G2}
\ee
which, replaced in the first equation of (\ref{eq:systemEquation}), leads to
\begin{align}
&\bG_1^{-1}\bS_1-\bG_1^{-1}\left[
\bS_{3}(\bG_3^\dag)^{-1}-\bG_1\bar{\bS}_3
\right]\bar{\bS}_2^{-1}\bG_3^{-1}\bS_3^\dag\nonumber
\\
&=
\bar{\bS}_1\bG_1^\dag+\bar{\bS_3}\bar{\bS}_2^{-1}(\bG_3^{-1}\bS_3^\dag-\bar{\bS}_3^\dag\bG_1^\dag)
\\
&\Rightarrow
\bS_1-\bS_{3}\bS_2^{-1}\bS_3^\dag
+\bG_1\bar{\bS}_3\bar{\bS}_2^{-1}\bG_3^{-1}\bS_3^\dag\nonumber
\\
&=
\bG_1\bar{\bS}_1\bG_1^\dag+\bG_1\bar{\bS_3}\bar{\bS}_2^{-1}\bG_3^{-1}\bS_3^\dag-\bG_1\bar{\bS}_3\bar{\bS}_2^{-1}\bar{\bS}_3^\dag\bG_1^\dag
\\
&\Rightarrow
\bG_1( \bar{\bS}_1-\bar{\bS}_3\bar{\bS}_2^{-1}\bar{\bS}_3^\dag )\bG_1^\dag
={\bS}_1-{\bS}_3{\bS}_2^{-1}{\bS}_3^\dag
\end{align}
The solution of the last equation has the following expression
\be
{\bG}_1=({\bS}_1-{\bS}_3{\bS}_2^{-1}{\bS}_3^\dag)^{1/2}( \bar{\bS}_1-\bar{\bS}_3\bar{\bS}_2^{-1}\bar{\bS}_3^\dag )^{-1/2}
\label{eq:G1}
\ee
and it is such that $\det({\bG}_1)\neq 0$. Replacing $\bG_1$ in (\ref{eq:G2}) with (\ref{eq:G1}) yields the expression of $\bG_2$.

Finally, $\bff_{11}$ can be evaluated as follows
\begin{align}
\bff&=\bz-\bG\bar{\bz}=\left[
\begin{array}{c}
\bz_1
\\
\bz_{23}
\end{array}
\right]-
\left[
\begin{array}{cc}
\bG_1 & \bG_2
\\
\bzero & \bG_3
\end{array}
\right]
\left[
\begin{array}{c}
\bar{\bz}_1
\\
\bar{\bz}_{23}
\end{array}
\right]
\\
&=
\left[
\begin{array}{c}
\bz_1-\bG_1\bar{\bz}_1-\bG_2\bar{\bz}_{23}
\\
\bzero
\end{array}
\right],
\end{align}
where the last equality comes from (\ref{eq:z23eqz23bar}).

\section{Statistical Characterization of the Maximal Invariant}
\label{appendix:ProofStatisticalCharacterization}

The joint pdf of $p_1$ and $p_2$ can be computed exploiting the following equality
\be
f_{p_1,p_2}(x,y)=f_{p_1|p_2}(x|p_2=y)f_{p_2}(y),
\ee
where $f_{p_1|p_2}(x|p_2=y)$ is the conditional pdf of $p_1$ given $p_2$ and $f_{p_2}(y)$ is the marginal pdf of $p_2$.

Let us focus on $f_{p_2}(y)$ and observe that from (\ref{eq:distribuzione_w}) it stems that $\bw_3\sim\cC\cN_{N-m}(\bzero,\bI_{N-m})$, while
by {\em Theorem A.11} of \cite{BOR-Morgan} $\bS_{033}\sim\cC\cW_{N-m}(K,\bI_{N-m})$. Now, recast $\bw_3^\dag\bS_{033}^{-1}\bw_3$ as
\be
\bar{p}_2=\ds\frac{\bw_3^\dag\bw_3}
{\ds\frac{\bw_3^\dag\bw_3}
{\bw_3^\dag\bS_{033}^{-1}\bw_3}}=\frac{a}{b}
\ee
where $a\sim\cC\chi^2_{N-m}(0)$ and, by {Theorem A.13} of \cite{BOR-Morgan}, $b\sim\cC\chi^2_{K-(N-m)+1}(0)$ is statistically independent of $a$.
It follows that $\bar{p}_2$ is ruled by the complex central F-distribution with $N-m,K-(N-m)+1$ complex degrees of freedom 
and, hence, $p_2=1/(1+\bar{p}_2)\sim\cC\beta_{K-(N-m)+1,N-m}(0)$.
Note that $p_2$ is an ancillary statistic and its distribution is the one and the same under both hypotheses.

The distribution of $p_1$ can be obtained assuming that the ``3-components'', namely the entries of the considered random vectors
sharing the subscript equal to 3, are no longer random variables but are assigned. In such a case, $\bar{p}_2$ is deterministic
and under $H_1$
\begin{itemize}
\item $\bd=(\bw_2-\bS_{023}\bS_{033}^{-1}\bw_3)/\sqrt{1+\bw_3^\dag\bS_{033}^{-1}\bw_3}
\sim\cC\cN_{r}((\bM_{22}-\bM_{23}\bM_{33}^{-1}\bM_{32})^{-1/2}\btheta_2,\bI_r)$;
\item $\bX=(\bS_{022}-\bS_{023}\bS_{033}^{-1}\bS_{032})\sim\cC\cW_{r}(K-(N-t)+r,\bI_{r})$ independent of $\bd$ 
(see {\em Theorem A.11} of \cite{BOR-Morgan}).
\end{itemize}
Again, exploiting {\em Theorem A.13} of \cite{BOR-Morgan} leads to
\be
\bar{p}_1=\frac{\ds\bd^\dag\bd}
{\ds
\frac{\ds\bd^\dag\bd}{\ds \bd^\dag\bX^{-1}\bd}
}\sim\cC\cF_{r,K-(N-t)+1}(\delta),
\ee
where
\be
\delta^2=\frac{\btheta_2^\dag(\bM_{22}-\bM_{23}\bM_{33}^{-1}\bM_{32})^{-1}\btheta_2}
{1+\bw_3^\dag\bS_{033}^{-1}\bw_3}
\ee
is the noncentrality parameter. As a consequence, given $p_2$, $p_1=1/(1+\bar{p}_1)\sim\cC\beta_{K-(N-t)+1,r}(\delta)$ with $\delta^2=\mbox{SINR}p_2$.
It is clear that under $H_0$ and given $p_2$, $p_1\sim\cC\beta_{K-(N-t)+1,r}(0)$, which does not depend on $p_2$.

Gathering the above results, the joint pdf of $p_1$ and $p_2$ can be written as
\begin{multline}
f_{p_1,p_2}(x,y; H_1)=f_{\beta}(x;K-(N-t)+1,r,0)
\\
\times f_{\beta}(y;K-(N-t)+r+1,N-t-r,0)
\\
\times e^{-i \ \mbox{\scriptsize SINR}  \ x y}\sum_{k=0}^{K-(N-t)+1}\binom{K-(N-t)+1}{k}
\\
\times \frac{(r-1)!}{(r+k-1)!}[\mbox{SINR} \  y (1-x)]^k,
\end{multline}
under $H_1$, and
\begin{align}
f_{p_1,p_2}(x,y; H_0)&=f_{\beta}(x;K-(N-t)+1,r,0)\nonumber
\\
&\times f_{\beta}(y;K-(N-t)+r+1,N-t-r,0),
\end{align}
under $H_0$, where
\be
f_{\beta}(x;n,m,0)=\frac{(n+m+1)!}{(n-1)!(m-1)!}x^{n-1}(1-x)^{m-1}.
\ee

\section{Derivation of the LMPID}
\label{appendix:ProofOfLMPIDdet}
It is not difficult to show that
\begin{align}
&\tLMPID\nonumber
\\
&=\Bigg\{\frac{\delta}{\delta\mbox{SINR}}
\Bigg[
e^{- \mbox{\scriptsize SINR}  \ p_1 p_2} \sum_{k=0}^{K-(N-t)+1}\binom{K-(N-t)+1}{k} \nonumber
\\
&\times\left.\left.\frac{(r-1)!}{(r+k-1)!}[\mbox{SINR} \  p_2 (1-p_1)]^k
\right]
\right\}_{\mbox{\scriptsize SINR=0}}\nonumber
\\
&=\Bigg\{-p_1 p_2 e^{- \mbox{\scriptsize SINR}  \ p_1 p_2}\sum_{k=0}^{K-(N-t)+1}\binom{K-(N-t)+1}{k}\nonumber
\\
&\times \frac{(r-1)!}{(r+k-1)!}[\mbox{SINR} \  p_2 (1-p_1)]^k+e^{-\mbox{\scriptsize SINR}  \ p_1 p_2} 
\nonumber
\\
&\times 
\sum_{k=1}^{K-(N-t)+1}\binom{K-(N-t)+1}{k}\frac{(r-1)!}{(r+k-1)!}\nonumber
\\
&\times k \mbox{SINR}^{k-1} \ [p_2 (1-p_1)]^k\Bigg\}_{\mbox{\scriptsize SINR=0}}\nonumber
\\
&=\frac{K-(N-t)+1}{r}p_2(1-p_1)-p_1p_2.
\end{align}


\setlength{\unitlength}{1cm}
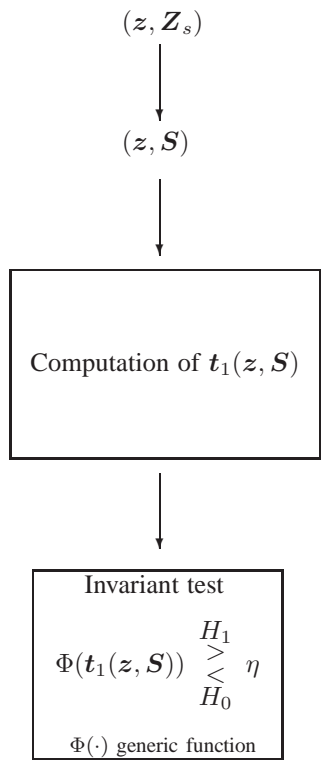
\begin{figure}[ht!]
\begin{center}
\begin{picture}(8,11)(0,0)
\put(4,11){$(\bz,\bZ_s)$}
\put(4.5,10.8){\vector(0,-1){1}}
\put(4,9.4){$(\bz,\bS)$}
\put(4.5,9){\vector(0,-1){1}}
\put(2.4,5.3){ \frame{\makebox(4.1,2.5){ Computation of $\bt_1(\bz,\bS)$ }}}
\put(4.5,5.1){\vector(0,-1){1}}
\put(2.7,1.3){ \frame{\makebox(3.3,2.5){ $\Phi(\bt_1(\bz,\bS))\test\eta$ }}}
\put(3.5,3.5){Invariant test}
\put(3.3,1.4){{\footnotesize $\Phi(\cdot)$ generic function}}
\end{picture}
\caption{Block diagram of the transformations that lead to a generic invariant test.}
\label{fig:trasformationInvariantTest}
\end{center}
\end{figure}

\begin{figure}[!htp]
    \centering
   \includegraphics[width=7cm]{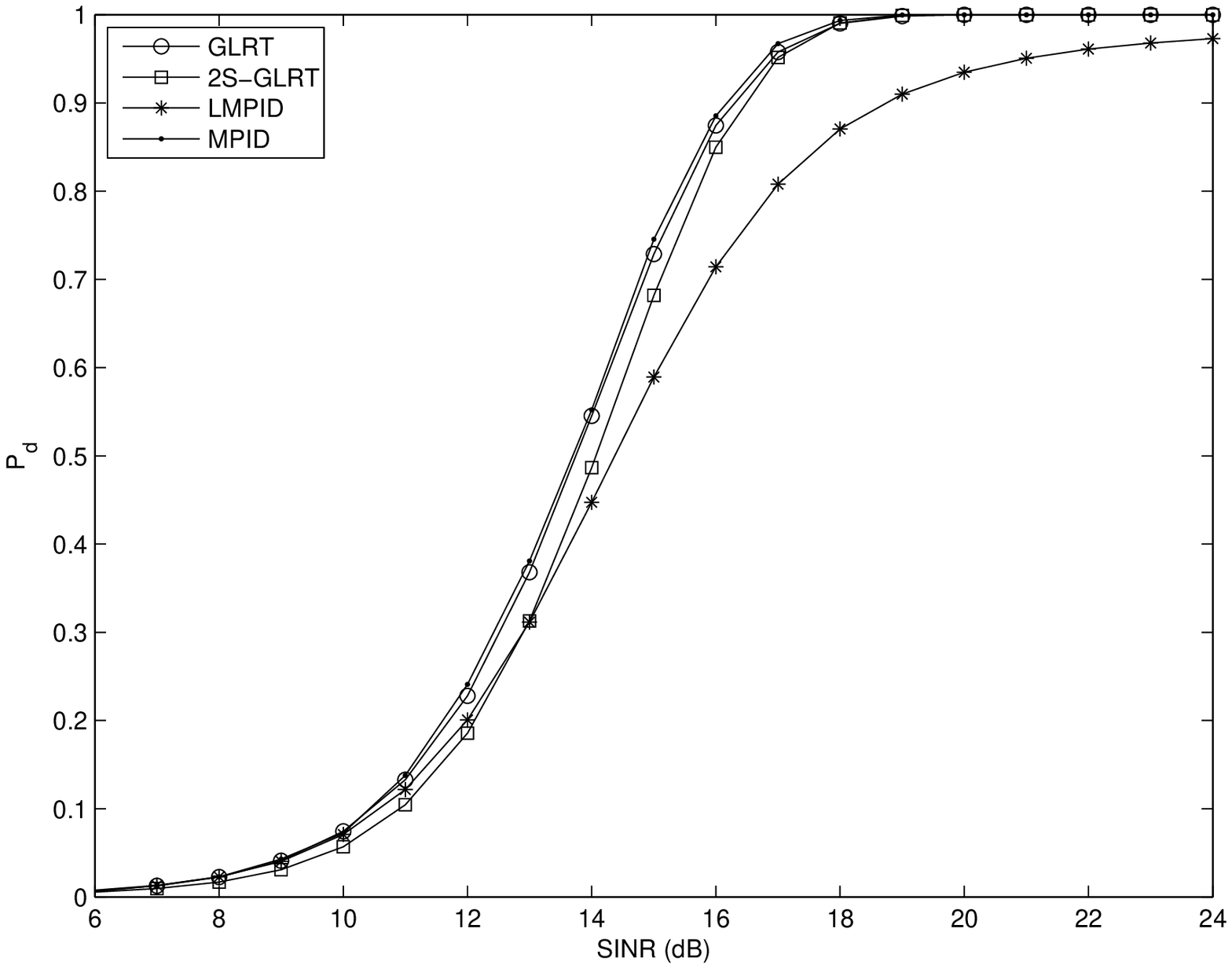}
    \caption{$P_d$ versus SINR for the GLRT, the 2S-GLRT, the LMPID, and the MPID assuming $N=8$, $K=12$, $r=2$, $t=4$, and $P_{fa}=10^{-4}$.}
    \label{fig:fig01}
\end{figure}

\begin{figure}[!htp]
    \centering
   \includegraphics[width=7cm]{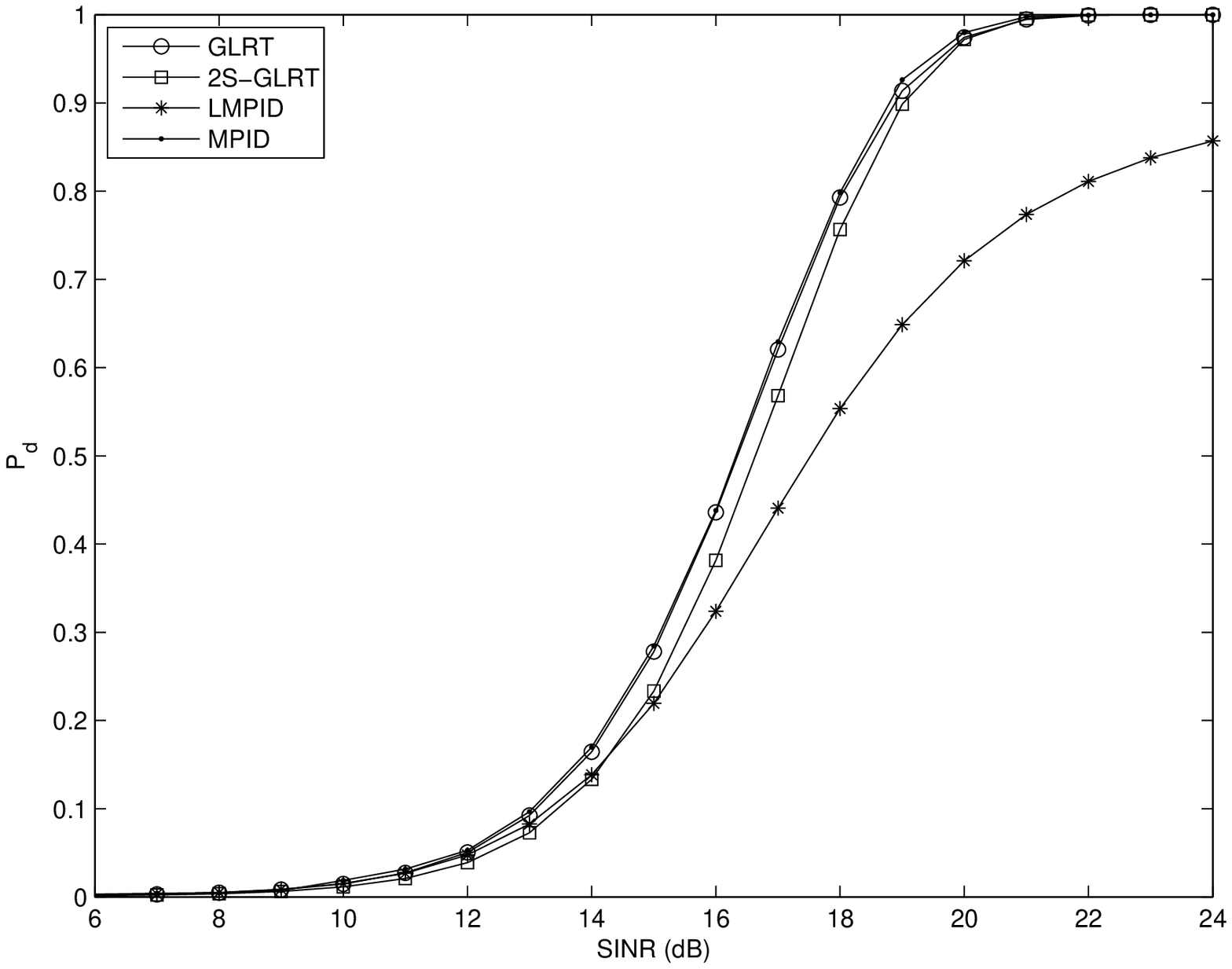}
    \caption{$P_d$ versus SINR for the GLRT, the 2S-GLRT, the LMPID, and the MPID assuming $N=8$, $K=12$, $r=4$, $t=2$, and $P_{fa}=10^{-4}$.}
    \label{fig:fig02}
\end{figure}

\begin{figure}[!htp]
    \centering
   \includegraphics[width=7cm]{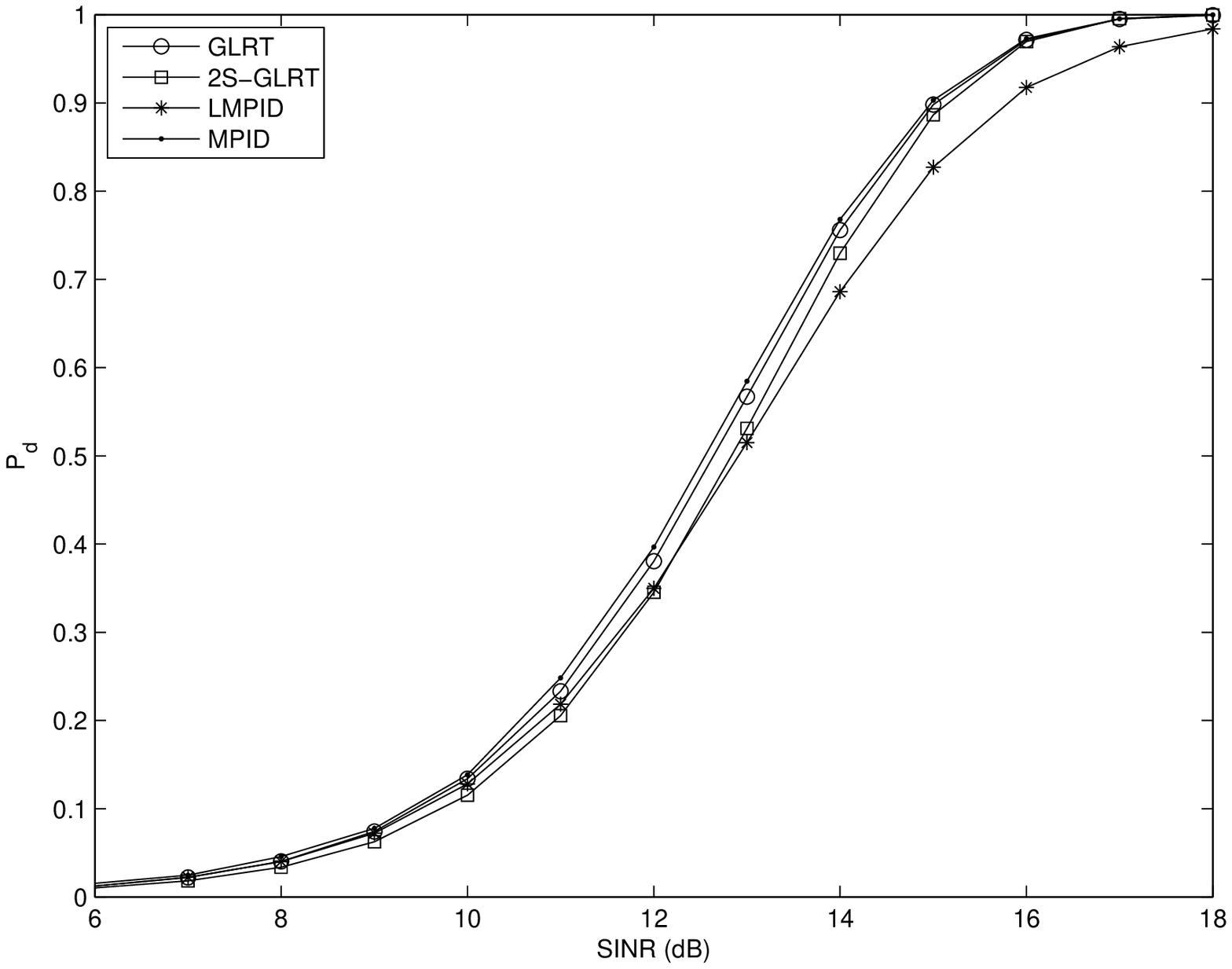}
    \caption{$P_d$ versus SINR for the GLRT, the 2S-GLRT, the LMPID, and the MPID assuming $N=8$, $K=16$, $r=2$, $t=4$, and $P_{fa}=10^{-4}$.}
    \label{fig:fig03}
\end{figure}

\begin{figure}[!htp]
    \centering
   \includegraphics[width=7cm]{}
    \caption{$P_d$ versus SINR for the GLRT, the 2S-GLRT, the LMPID, and the MPID assuming $N=8$, $K=16$, $r=4$, $t=2$, and $P_{fa}=10^{-4}$.}
    \label{fig:fig04}
\end{figure}


\begin{thebibliography}{99}
%
\bibitem{Kelly-GLRT}
E. J. Kelly, ``An Adaptive Detection Algorithm,''
{\em IEEE Transactions on Aerospace and Electronic Systems}, Vol. 22, No. 2, pp. 115-127, March 1986.
%
\bibitem{Kelly-Nitzberg}
F. C. Robey, D. L. Fuhrman, E. J. Kelly, and R. Nitzberg, ``A CFAR Adaptive Matched Filter Detector,''
{\em IEEE Transactions on Aerospace and Electronic Systems}, Vol. 29, No. 1, pp. 208-216, January 1992.
%
\bibitem{yuri}
Y. I. Abramovich, N. K. Spencer, and A. Y. Gorokhov, ``Modified GLRT and AMF Framework for Adaptive Detectors,''
{\em IEEE Trans. on Aerospace and Electronic Systems}, Vol. 43, No. 3, pp. 1017-1051, July 2007.
%
\bibitem{orlando2010}
D. Orlando and G. Ricci, ``Adaptive radar detection and localization of a point-like target,''
{\em IEEE Transactions on Signal Processing}, Vol. 59, No. 9, pp. 4086-4096, September 2011.
%
\bibitem{Pulsone}
N. B. Pulsone and C. M. Rader, ``Adaptive Beamformer Orthogonal Rejection Test,''
{\em IEEE Transactions on Signal Processing}, Vol. 49, No. 3, pp. 521-529, March 2001.
%
\bibitem{Bandiera07}
F. Bandiera, O. Besson, and G. Ricci, ``An ABORT-like detector with improved mismatched signals rejection capabilities,''
{\em IEEE Transactions on Signal Processing}, Vol. 56, No. 1, pp. 14-25, January 2008.
%
\bibitem{BOR-Morgan} 
F. Bandiera, D. Orlando, and G. Ricci, {\em Advanced radar detection schemes under mismatched signal models},
Synthesis Lectures on Signal Processing No. 8, Morgan \& Claypool Publishers, March 2009.
%
\bibitem{DeMaioMismatches}
A. De Maio, ``Robust Adaptive Radar Detection in the Presence of Steering Vector Mismatches,''
{\em IEEE Transactions on Aerospace and Electronic Systems}, Vol. 41, No. 4,  pp. 1322-1337, October 2005.
%
\bibitem{Gini-Farina-Greco}
M. Greco, F. Gini, and A. Farina, ``Radar Detection and Classification of Jamming Signals Belonging to a Cone Class,''
{\em IEEE Transactions on Signal Processing}, Vol. 56, No. 5, pp. 1984-1993, May 2008.
%
\bibitem{antennaBased}
A. Farina, {\em Antenna-Based Signal Processing Techniques for Radar Systems}, Boston, MA: Artech House, 1992.
%
\bibitem{Richmond-1}
C. D.  Richmond, ``Performance of a Class of Adaptive Detection Algorithms in Nonhomogeneous Environments,''
{\em IEEE Transactions on Signal Processing}, Vol. 48, No. 5, pp. 1248-1262, May 2000.
%
\bibitem{Richmond-2}
C. D.  Richmond, ``Performance of the Adaptive Sidelobe Blanker Detection Algorithm in Homogeneous Environments,'' 
{\em IEEE Transactions on Signal Processing}, Vol. 48, No. 5, pp. 1235-1247, May 2000.
%
\bibitem{Kelly-TR}
E. J. Kelly and K. Forsythe, ``Adaptive Detection and Parameter Estimation for Multidimensional Signal Models,''
Lincoln Lab, MIT, Lexington, Tech. Rep. No. 848, April 19, 1989.
%
\bibitem{Scharf-Friedlander}
L. L. Scharf and B. Friedlander, ``Matched Subspace Detectors,''
{\em IEEE Transactions on Signal Processing}, Vol. 42, No. 8, pp. 2146-2157, August 1994.
%
\bibitem{ABGR-Subspace}
F. Bandiera, A. De Maio, A. S. Greco, and G. Ricci, ''Adaptive Radar Detection of Distributed Targets in Homogeneous and 
Partially Homogeneous Noise Plus Subspace Interference,'' {\em IEEE Transactions on Signal Processing}, Vol. 55, No 4, pp. 1223-1237, April 2007.
%
\bibitem{BBORS-Direction}
F. Bandiera, O. Besson, D. Orlando, G. Ricci, and L. L. Scharf, ``GLRT-Based Direction Detectors in Homogeneous Noise and Subspace Interference,''
{\em IEEE Transactions on Signal Processing}, Vol. 55, No. 6, pp. 2386-2394, June 2007.
%
\bibitem{raghavanSubspace}
R. S. Raghavan, N. Pulsone, and D. J. McLaughlin, ``Performance of the GLRT for Adaptive Vector Subspace Detection,''
{\em IEEE Transactions on Aerospace and Electronic Systems}, Vol. 32, No. 4, pp. 1473-1487, October 1996.
%
\bibitem{BSV-Subspace}
O. Besson, L. L. Scharf, and F. Vincent, ``Matched Direction Detectors and Estimators for Array Processing With Subspace Steering Vector Uncertainties,''
{\em IEEE Transactions on Signal Processing}, Vol. 53, No. 12, pp. 4453-4463, December 2005.
%
\bibitem{gini1}
F. Gini and A. Farina, ``Vector Subspace Detection in Compound-Gaussian Clutter Part I: Survey and New Results,''
{\em IEEE Transactions on Aerospace and Electronics Systems}, Vol. 38, No. 4, pp. 1295-1311, October 2002.
%
\bibitem{Fabrizio-Farina}
G. A. Fabrizio, A. Farina, and M. D. Turley, ``Spatial Adaptive Subspace Detection in OTH Radar,''
{\em IEEE Transactions on Aerospace and Electronic Systems}, Vol. 39, No. 4, pp. 1407-1427, October 2003.
%
\bibitem{Li2}
P. Wang, J. Fang, H. Li, and B. Himed, ``Detection With Target-Induced Subspace Interference,'' 
{\em IEEE Signal Processing Letters}, Vol. 19, No. 7, pp. 403-406, July 2012.
%
\bibitem{Bose-Steinhardt_02}
S. Bose and A. O. Steinhardt, ``Adaptive Array Detection of Uncertain Rank One Waveforms,''
{\em IEEE Transactions on Signal Processing}, Vol. 44, No. 11, pp. 2801-2809, November 1996.
%
\bibitem{lehmann}
E. L. Lehmann, {\em Testing Statistical Hypotheses}, Springer-Verlag, 2nd edition, 1986.
\bibitem{Muirhead}
R. J. Muirhead, {\em Aspects of Multivariate Statistical Theory}, JohnWiley \& Sons, 1982.
%
\bibitem{Scharf-book}
L. L. Scharf, {\em Statistical Signal Processing: Detection, Estimation, and Time Series Analysis}, Addison-Wesley, 1991.
%
\bibitem{Bose-Steinhardt}
S. Bose and A. O. Steinhardt, ``A Maximal Invariant Framework for Adaptive Detection with Structured and Unstructured Covariance Matrices,''
{\em IEEE Transactions on Signal Processing}, Vol. 43, No. 9, pp. 2164-2175, September 1995.
\bibitem{Kay-Gabriel}
S. M. Kay and J. R. Gabriel, ``Optimal Invariant Detection of a Sinusoid with Unknown Parameters,''
{\em IEEE Transactions on Signal Processing}, Vol. 50, No. 1, pp. 27-40, January 2002.
\bibitem{DeMaio2}
A. De Maio, E. Conte, and C. Galdi, ``CFAR Detection of Multidimensional Signals: an Invariant Approach'',
{\em IEEE Transactions on Signal Processing}, Vol. 58, No. 1, pp. 142-151, January 2003.
\bibitem{DeMaioConte_Invariance}
A. De Maio and E. Conte, ``Adaptive Detection in Gaussian Interference with Unknown Covariance after Reduction by Invariance,''
{\em IEEE Transactions on Signal Processing}, Vol. 58, No. 6, pp. 2925-2934, June 2010.
\bibitem{Raghavan}
R. S. Raghavan, ``Maximal Invariants and Performance of Some Invariant Hypothesis Tests for an Adaptive Detection Problem,''
{\em IEEE Transactions on Signal Processing}, Vol. 61, No. 14, pp. 3607-3619, July 2013.
\bibitem{ScharfInv}
L. L. Scharf and B. Friedlander, ``Matched Subspace Detectors,''
{\em IEEE Transactions on Signal Processing}, Vol. 42, No. 8, pp. 2146-2157, August 1994.
\bibitem{BessonInv}
O. Besson, L. L. Scharf, and S. Kraut, ``Adaptive Detection of a Signal Known Only to Lie on a Line in a Known Subspace, When Primary and
Secondary Data are Partially Homogeneous,'' {\em IEEE Transactions on Signal Processing}, Vol. 54, No. 12, pp. 4698-4705, December 2006.
%
\bibitem{Raghavan-2}
R. S. Raghavan, ``Maximal Invariants and Performance of Some Invariant Hypothesis Tests for an Adaptive Detection Problem,''
{\em IEEE Transactions on Signal Processing}, Vol. 61, No. 14, pp. 3607-3619, July 2013.
%
\bibitem{Poor}
H. V. Poor, {\em An Introduction to Signal Detection and Estimation}, Springer-Verlag, 1988.
\bibitem{ViaLMPI}
J. V\`ia and L. Vielva, ``Locally Most Powerful Invariant Tests for the Properness of Quaternion Gaussian Vectors,'' 
{\em IEEE Transactions on Signal Processing}, Vol. 60, No. 3, pp. 997-1009, March 2012.
%
\bibitem{KaySPL}
S. M. Kay and J. R. Gabriel, ``An Invariance Property of the Generalized Likelihood Ratio Test,''
{\em IEEE Signal Processing Letters}, Vol. 10, No. 12, pp. 352-355, December 2003.
%
\bibitem{hongbinPersymmetric}
P. Wang, Z. Sahinoglu, M. Pun, and H. Li, ``Persymmetric Parametric Adaptive Matched Filter for Multichannel Adaptive Signal Detection,''
{\em IEEE Transactions on Signal Processing}, Vol. 60, No. 6, pp. 3322-3328, June 2012.
%
\bibitem{Pascal}
G. Pailloux, P. Forster, J.P. Ovarlez, and F. Pascal, ``Persymmetric Adaptive Radar Detectors,''
{\em IEEE Transactions on Aerospace and Electronic Systems}, Vol. 47, No. 4, pp. 2376-2390, October 2011.
%
\end{thebibliography}
\end{document}